\newcommand{\copyrightnote}[2]{{\renewcommand{\thefootnote}{}
 \footnotetext{\small\it
\begin{flushleft}
 \copyright \ #1   #2  
\end{flushleft}}}}
\newcommand{\Name}[1]{\begin{flushleft}
                       \LARGE \bf #1
                       \end{flushleft}\vspace{-3mm}}
\newcommand{\Author}[1]{\begin{flushleft}
                       \it #1 \end{flushleft}}
\newcommand{\Address}[1]{\begin{flushleft}
                       \it #1 \end{flushleft}}
\newcommand{\Date}[1]{\begin{flushleft}
                      \small  \it #1 \end{flushleft}}
\newcommand{\evenhead}{Author \ name}
\newcommand{\oddhead}{Article \ name}
\renewcommand{\@evenhead}{
\hspace*{-3pt}\raisebox{-15pt}[\headheight][0pt]{\vbox{\hbox to \textwidth
{\thepage \hfil \evenhead}\vskip4pt \hrule}}}
\renewcommand{\@oddhead}{
\hspace*{-3pt}\raisebox{-15pt}[\headheight][0pt]{\vbox{\hbox to \textwidth
{\oddhead \hfil \thepage}\vskip4pt\hrule}}}
\renewcommand{\@evenfoot}{}
\renewcommand{\@oddfoot}{}
\long\def\@makecaption#1#2{%
  \vskip\abovecaptionskip
  \sbox\@tempboxa{\small \textbf{#1.}\ \ #2}%
  \ifdim \wd\@tempboxa >\hsize
    {\small \textbf{#1.}\ \ #2}\par
  \else
    \global \@minipagefalse
    \hb@xt@\hsize{\hfil\box\@tempboxa\hfil}%
  \fi
  \vskip\belowcaptionskip}
\newcommand{\JNMPnumberwithin}[3][\arabic]{%
  \@ifundefined{c@#2}{\@nocounterr{#2}}{%
    \@ifundefined{c@#3}{\@nocnterr{#3}}{%
      \@addtoreset{#2}{#3}%
      \@xp\xdef\csname the#2\endcsname{%
        \@xp\@nx\csname the#3\endcsname .\@nx#1{#2}}}}%
}
\renewenvironment{proof}[1][\proofname]{\par
  \normalfont
  \topsep6\p@\@plus6\p@ \trivlist
  \item[\hskip\labelsep\textbf{%
    #1\@addpunct{.}}]\ignorespaces
}{%
  \qed\endtrivlist
}
\newcommand{\resetfootnoterule} {
  \renewcommand\footnoterule{%
  \kern-3\p@
  \hrule\@width.4\columnwidth
  \kern2.6\p@}
}
\renewcommand{\footnoterule}{}
\newtheorem{theorem}{Theorem}
\newtheorem{corollary}[theorem]{Corollary}
\newtheorem{definition}[theorem]{Definition}
\newtheorem{example}[theorem]{Example}
\newtheorem{lemma}[theorem]{Lemma}
\newtheorem{remark}[theorem]{Remark}
\begin{document}

\renewcommand{\evenhead}{ {\LARGE\textcolor{blue!10!black!40!green}{{\sf \ \ \ ]ocnmp[}}}\strut\hfill M B\l aszak and K Marciniak}
\renewcommand{\oddhead}{ {\LARGE\textcolor{blue!10!black!40!green}{{\sf ]ocnmp[}}}\ \ \ \ \  Algebraic curves and separable multi-Hamiltonian systems}

\thispagestyle{empty}
\newcommand{\FistPageHead}[3]{
\begin{flushleft}
\raisebox{8mm}[0pt][0pt]
{\footnotesize \sf
\parbox{150mm}{{Open Communications in Nonlinear Mathematical Physics}\ \ \ \ {\LARGE\textcolor{blue!10!black!40!green}{]ocnmp[}}
\quad Special Issue 2, 2024\ \  pp
#2\hfill {\sc #3}}}\vspace{-13mm}
\end{flushleft}}

\FistPageHead{1}{\pageref{firstpage}--\pageref{lastpage}}{ \ \ }

\strut\hfill

\strut\hfill

\copyrightnote{The author(s). Distributed under a Creative Commons Attribution 4.0 International License}

\begin{center}

{\bf {\large Proceedings of the OCNMP-2024 Conference:\\ 

\smallskip

Bad Ems, 23-29 June 2024}}
\end{center}

\smallskip

\Name{Algebraic curves as a source of separable multi-Hamiltonian systems}

\Author{Maciej B\l aszak$^{\,1}$ and Krzysztof Marciniak$^{\,2}$}

\Address{$^{1}$ Maciej B\l aszak\\Faculty of Physics, Department of of Mathematical Physics and Computer Modelling,\\A. Mickiewicz University, 61-614 Pozna\'{n}, Poland\\\texttt{blaszakm@amu.edu.pl}\\[2mm]
$^{2}$ Department of Science and Technology \\Campus Norrk\"{o}ping, Link\"{o}ping University\\601-74 Norrk\"{o}ping, Sweden\\\texttt{krzma@itn.liu.se}}

\Date{Received January 12, 2024; Accepted February 28, 2024}

\setcounter{equation}{0}

\begin{abstract}

\noindent 
In this paper we systematically consider various ways of generating integrable
and separable Hamiltonian systems in canonical and in non-canonical
representations from algebraic curves on the plane. In particular, we consider
St\"{a}ckel transform between two pairs of St\"{a}ckel systems, generated by
$2n$-parameter algebraic curves on the plane, as well as Miura maps between
St\"{a}ckel systems generated by $(n+N)$-parameter algebraic curves, leading
to multi-Hamiltonian representation of these systems.

\end{abstract}

\label{firstpage}


\section{Introduction\label{s0}}

This paper is devoted to a systematic (in the sense explained below)
construction of various types of Liouville integrable and separable
Hamiltonian systems from algebraic curves.

In \cite{Sklyanin} Sklyanin noted that any Liouville integrable system (that
is a set of $n$ Hamiltonians in involution on a $2n$-dimensional manifold $M$)
separates in a given canonical coordinate system
$(\bm{\lambda},\bm{\mu})\equiv(\lambda_{1},\ldots,\lambda_{n},\mu_{1},\ldots,\mu_{n})$ if and only if there exists $n$ separation relations of the
form
\begin{equation}
\varphi_{i}(\lambda_{i},\mu_{i},h_{1},...,h_{n})=0,\qquad i=1,\ldots,n
\label{0.1}%
\end{equation}
(see also \cite{FP}). Alternatively, one can treat the relations (\ref{0.1})
as an algebraic definition of $n$ commuting, by construction, Hamiltonians
$h_{i}$ on $M$. The canonical variables $(\bm{\lambda},\bm{\mu})$ are then by
construction separation variables for all the Hamilton-Jacobi equations
associated with the Hamiltonians $h_{i}$. This shift of view yields a powerful
way of generating many (in fact, all known in literature) separable
Hamiltonian systems from scratch. This approach has been initiated in
\cite{blaszak2000} and then fruitfully developed in many papers, see for
example \cite{blaszak2005,blaszak2011,SAPM2012} as well as the review of the
subject in \cite{blasz2019}.

In this paper we restrict ourselves to the important subclass of separations
relations (\ref{0.1}) where all $\varphi_{i}$ are the same, $\varphi
_{i}=\varphi$ for all $i$. In such a case the relations (\ref{0.1}) can be
interpreted as $n$ copies of the algebraic curve on the $\lambda$-$\mu$ plane
\begin{equation}
\varphi(\lambda,\mu,h_{1},...,h_{n})=0, \label{0.2}%
\end{equation}
when $(\lambda,\mu)$ are consecutively substituted by the pair of variables
$(\lambda_{i},\mu_{i})$. One reason for restricting our attention to
separation curves (\ref{0.2}) rather than the general separation relations
(\ref{0.1}) is that in the general setting there arise problems with finding
the multi-Hamiltonian formulation of the generated separable systems. Another
reason for this restriction is that it allows us to skip the assumption that
the corresponding coordinates $(\bm{\lambda},\bm{\mu})$ on $M$ are canonical.
Using this approach, in this article we systematize and develop the idea of
constructing various types of finite-dimensional integrable and separable
Hamiltonian systems from \emph{parameter-dependent }planar algebraic curves.
To our knowledge this is the first systematic (albeit certainly not complete)
investigation of separation curves depending on more than $n$ parameters.
Relations between integrable systems and \thinspace$n$-parameter hyperelliptic
curves were extensively investigated for example in \cite{vh2001} (see also
references there).

Below we present the structure of the article and highlight all the new results.

In Section \ref{s1} we establish a number of facts for Poisson structures in
$2$ dimensions and of monomial type. We first establish Lemma \ref{D} where we
find all Darboux coordinates associated with the Poisson tensor (\ref{1.2}%
)$\,\ $on the plane with $c$ of the monomial form $c=\lambda^{\alpha}%
\mu^{\beta}$ and then we prove Lemma \ref{DD} where we establish canonical
maps between arbitrary pair of Darboux coordinates for $\pi$. These results
will be necessary for establishing results of Section 3.

In the first subsection of Section \ref{s2} we prove (Theorem \ref{KOM}) that
the Hamiltonians $h_{i}$ obtained by algebraically solving $n$ copies of
(\ref{0.2}) constitute a Liouville integrable system not only if the
corresponding coordinates $(\bm{\lambda},\bm{\mu})$ on $M$ are canonical, but
in a more general case when the Poisson operator $\pi$ has in the variables
$(\bm{\lambda},\bm{\mu})$ the form (\ref{2.3}). This is a simple
generalization of the previously known result (see for example
\cite{blasz2019}). The second subsection of Section \ref{s2} contains basic
information on separable Hamiltonian systems, in particular of St\"{a}ckel
type. This subsection is necessary for the self-consistency of the article.

In Section \ref{s3} we use the results of Section \ref{s1} to show that each
Liouville integrable Hamiltonian system generated by an algebraic curve
(\ref{0.2}) and by the non-canonical Poisson tensor (\ref{1.2}) can also be
generated by a one-parameter family of algebraic curves and the corresponding
Poisson tensors in canonical form. Each class represents thus the same
dynamical system written in different Darboux coordinates, related with each
other by appropriate canonical transformations. We further specify these
results for the case of monomial Poisson structures (with $c(\lambda_{i}%
,\mu_{i})=\lambda_{i}^{\alpha}\mu_{i}^{\beta}$), see formulas (\ref{3.2})
yielding explicit transformation to Darboux coordinates in this case.

In Section \ref{s4} we consider separable systems generated by algebraic
curves depending on a set of $n+n$ rather than $n$ parameters. Each such curve
leads then to two distinct integrable Hamiltonian systems. Using the known
theory (\cite{serg2008,SAPM2012}) we prove that these systems are related by a
St\"{a}ckel transform and we also show how solutions of these two systems are
related by a reciprocal (multi-time) transformations. We also specify these
results to the case of St\"{a}ckel systems. These results are new.

In the final Section \ref{s5} we investigate yet another possibility of
generating integrable and separable Hamiltonian systems from algebraic curves.
We consider algebraic curves (\ref{5.1}) with the block-type structure given
by (\ref{5.1a}) and (\ref{5.1b}), leading to families of integrable and
separable Hamiltonian systems that can be related (due to Theorem
\ref{obvious}) with each other by a finite-dimensional analogue of Miura maps.
These finite-dimensional Miura maps yield in turn multi-Hamiltonian
formulation of the obtained integrable systems, as presented in Theorem
\ref{obvious2}. These are new results that generalize the particular results
obtained earlier in \cite{blaszak2009} and in \cite{marciniak2023}. The
section is concluded by some examples. Theorem \ref{obvious} is proven in
Appendix, due to a rather technical nature of the proof.

\section{Poisson tensors on $2$-dimensional phase space\label{s1}}

In this section we consider Poisson structures in $2$ dimensions (on a
$\lambda$-$\mu$ plane) of a monomial type and find all their Darboux
coordinates that can be obtained from the coordinates $(\lambda,\mu)$ by
monomial transformations. We also find a general map between arbitrary Darboux
coordinate systems of monomial type.

Let us thus consider a ($\lambda,\mu)$ plane $P$. A Poisson tensor $\pi$ on
$P$ is a bi-vector with vanishing Schouten-Nijenhuis bracket. The Poisson
tensor $\pi$ must be of co-rank zero since $\dim P=2$ . It defines a Poisson
bracket on the plane:
\begin{equation}
\{f,g\}_{\pi}:=\pi(df,dg),\ \ \ \ \ \ f,g\in C^{\infty}(P). \label{1.1}%
\end{equation}

\begin{lemma}
The most general Poisson tensor $\pi$ on $P$ is of the form%
\begin{equation}
\pi=c(\lambda,\mu)\frac{\partial}{\partial\lambda}\wedge\frac{\partial
}{\partial\mu}\text{,}\ \text{\ with matrix representation }\pi=\left(
\begin{array}
[c]{cc}%
0 & c(\lambda,\mu)\\
-c(\lambda,\mu) & 0
\end{array}
\right)  ,\ \ \ c\in C^{2}(P). \label{1.2}%
\end{equation}

\end{lemma}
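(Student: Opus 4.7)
The plan is to prove this in two elementary steps: first establishing the claimed form of any bivector on $P$, and second checking that the Jacobi identity imposes no constraint on the coefficient function.

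For the first step, I would observe that at each point $p\in P$ the fiber $\Lambda^{2}T_{p}P$ is one-dimensional, spanned by $\partial_{\lambda}|_{p}\wedge\partial_{\mu}|_{p}$, since $\dim P = 2$. Hence any bivector field on $P$ must be of the form $\pi = c(\lambda,\mu)\,\partial_{\lambda}\wedge\partial_{\mu}$ for a uniquely determined scalar function $c$. The matrix representation in the basis $(\partial_{\lambda},\partial_{\mu})$ follows immediately from the antisymmetry of the wedge product. The regularity assumption $c\in C^{2}(P)$ is the natural one to ensure that the bracket (\ref{1.1}) maps smooth functions to at least continuously differentiable ones and is meaningful for Hamiltonian dynamics.

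For the second step, I would recall that a bivector defines a Poisson tensor exactly when its Schouten--Nijenhuis self-bracket $[\pi,\pi]_{SN}$ vanishes, and that $[\pi,\pi]_{SN}$ is a tri-vector field. Since $\dim P = 2$, the space $\Lambda^{3}T_{p}P$ is trivial at every point $p\in P$, so necessarily $[\pi,\pi]_{SN}\equiv 0$ for \emph{any} bivector $\pi$ on $P$. Consequently no differential condition is imposed on $c$, and every bivector of the form (\ref{1.2}) automatically satisfies the Jacobi identity.

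Combining the two steps yields both inclusions: every Poisson tensor is of the form (\ref{1.2}), and every choice of $c\in C^{2}(P)$ produces a Poisson tensor. There is no genuine obstacle here, as the Jacobi identity is automatic in dimension two; the only thing to be careful about is stating the dimension-of-fiber argument precisely, and confirming that the co-rank zero property stated in the text is consistent with requiring $c$ nowhere vanishing only when one wants a symplectic (rather than merely Poisson) structure, which the lemma does not demand.
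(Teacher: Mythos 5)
Your proof is correct, and it takes a genuinely different route from the paper's. The paper proves the lemma by writing out the bracket $\{f,g\}_{\pi}=\left(\frac{\partial f}{\partial\lambda}\frac{\partial g}{\partial\mu}-\frac{\partial f}{\partial\mu}\frac{\partial g}{\partial\lambda}\right)c$ and verifying the Jacobi identity (\ref{1.3}) by direct computation; it does not explicitly argue why every Poisson tensor must have the form (\ref{1.2}) in the first place. You replace the computation with two structural observations: that $\Lambda^{2}T_{p}P$ is one-dimensional (so every bivector is $c\,\partial_{\lambda}\wedge\partial_{\mu}$), and that $[\pi,\pi]_{SN}$ is a trivector and hence vanishes identically since $\Lambda^{3}T_{p}P=0$ in dimension two. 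Your argument is both more complete on the ``most general'' claim and computation-free, at the cost of invoking the Schouten--Nijenhuis formalism rather than the elementary bracket definition (\ref{1.1}) that the paper works with; the paper's direct verification has the advantage of producing the explicit bracket formula (\ref{1.4}) that is used repeatedly later (e.g.\ in condition (\ref{1.6})). Your closing remark about co-rank is apt: nondegeneracy (co-rank zero, as asserted in the text preceding the lemma) requires $c$ nowhere vanishing, which the lemma as stated does not impose.
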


\begin{proof}
The necessary and sufficient condition for being Poisson tensor is a Jacobi
identity%
\begin{equation}
\{f,\{g,h\}_{\pi}\}_{\pi}+\{g,\{h,f\}_{\pi}\}_{\pi}+\{h,\{f,g\}_{\pi}\}_{\pi
}=0. \label{1.3}%
\end{equation}
By a direct computation one can verify the identity (\ref{1.3}) for tensor
(\ref{1.2}), where
\begin{equation}
\{f,g\}_{\pi}=\left(  \frac{\partial f}{\partial\lambda}\frac{\partial
g}{\partial\mu}-\frac{\partial f}{\partial\mu}\frac{\partial g}{\partial
\lambda}\right)  c. \label{1.4}%
\end{equation}

\end{proof}

Now, let us change the parametrization of the plane $(\lambda,\mu
)\longrightarrow(\bar{\lambda},\bar{\mu})$ given by%
\begin{equation}
\bar{\lambda}=a(\lambda,\mu),\ \ \ \ \ \bar{\mu}=b(\lambda,\mu) \label{1.5}%
\end{equation}
and such that new coordinates are Darboux (canonical) coordinates for $\pi$,
i.e. $c(\bar{\lambda},\bar{\mu})=1$. It means that following condition
\begin{equation}
\left(  \frac{\partial a}{\partial\lambda}\frac{\partial b}{\partial\mu}%
-\frac{\partial a}{\partial\mu}\frac{\partial b}{\partial\lambda}\right)  c=1
\label{1.6}%
\end{equation}
has to be fulfilled for a pair of unknown functions $a$ and $b$.

Consider a particular but relevant subclass of Poisson tensors (\ref{1.4})
defined by functions $c$ in the monomial form $c=\lambda^{\alpha}\mu^{\beta}$
for fixed $\alpha$,$\beta\in\mathbb{R}$. Let us now search for transformations
to Darboux coordinates of (\ref{1.2}) within the following ansatz
\begin{equation}
\ \bar{\lambda}=\lambda^{\bar{\alpha}_{1}}\mu^{\bar{\alpha}_{2}}%
,\ \ \ \ \bar{\mu}=\lambda^{\bar{\beta}_{1}}\mu^{\bar{\beta}_{2}}, \label{1.7}%
\end{equation}
where $\bar{\alpha}_{1},\bar{\alpha}_{2},\bar{\beta}_{1},\bar{\beta}_{2}%
\in\mathbb{R}$. This ansatz has the following inverse%
\begin{equation}
\lambda=\bar{\lambda}^{\frac{\bar{\beta}_{2}}{\bar{\alpha}_{1}\bar{\beta}%
_{2}-\bar{\alpha}_{2}\bar{\beta}_{1}}}\bar{\mu}^{-\frac{\bar{\alpha}_{2}}%
{\bar{\alpha}_{1}\bar{\beta}_{2}-\bar{\alpha}_{2}\bar{\beta}_{1}}}%
,\ \ \ \ \mu=\bar{\lambda}^{-\frac{\bar{\beta}_{1}}{\bar{\alpha}_{1}\bar
{\beta}_{2}-\bar{\alpha}_{2}\bar{\beta}_{1}}}\bar{\mu}^{\frac{\bar{\alpha}%
_{1}}{\bar{\alpha}_{1}\bar{\beta}_{2}-\bar{\alpha}_{2}\bar{\beta}_{1}}}.
\label{1.7a}%
\end{equation}
A direct calculation using the condition (\ref{1.6}) shows that the
transformation (\ref{1.7}) turns $\pi\,$into canonical form if and only if%
\begin{equation}
\bar{\alpha}_{1}+\ \bar{\beta}_{1}+\alpha=1,\text{ \ }\bar{\alpha}_{2}%
+\ \bar{\beta}_{2}+\beta=1,\text{ \ }\bar{\alpha}_{1}\bar{\beta}_{2}%
-\bar{\alpha}_{2}\bar{\beta}_{1}=1, \label{rnka}%
\end{equation}
which leads to the following lemma.

\begin{lemma}
\label{D}The coordinates $\left(  \ \bar{\lambda},\ \bar{\mu}\right)  $ given
by (\ref{1.7}) are Darboux (canonical) coordinates for $\pi$ if and only if
the parameters $\bar{\alpha}_{1},\bar{\alpha}_{2},\bar{\beta}_{1},\bar{\beta
}_{2}$ are given by, for $\alpha\neq1$
\begin{equation}
\bar{\alpha}_{1}=\bar{\alpha}_{1},\ \ \ \bar{\alpha}_{2}=\frac{\bar{\alpha
}_{1}(1-\beta)-1}{1-\alpha},~\ \ \ \bar{\beta}_{1}=1-\alpha-\bar{\alpha}%
_{1},\ \ \ \ \bar{\beta}_{2}=\frac{(1-\beta)(1-\alpha-\bar{\alpha}_{1}%
)+1}{1-\alpha},\ \ \ \ \label{1.8}%
\end{equation}
and for $\beta\neq1$%
\begin{equation}
\bar{\alpha}_{1}=\frac{(1-\alpha)(1-\beta-\ \bar{\beta}_{2})+1}{1-\beta
},\ \ \ \bar{\alpha}_{2}=1-\beta-\ \bar{\beta}_{2},\ \ \ \ \bar{\beta}%
_{1}=\frac{\ \bar{\beta}_{2}(1-\alpha)-1}{1-\beta},\ \ \ \ \bar{\beta}%
_{2}=\bar{\beta}_{2}. \label{1.9}%
\end{equation}

\end{lemma}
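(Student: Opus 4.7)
The plan is straightforward: the necessary and sufficient condition for the monomial change of variables (\ref{1.7}) to send $\pi$ to canonical form has already been written down as the three relations (\ref{rnka}). What remains is to solve this system of three equations in the four unknowns $\bar\alpha_1,\bar\alpha_2,\bar\beta_1,\bar\beta_2$ (with $\alpha,\beta$ treated as given parameters). Since we have one equation fewer than unknowns, we expect a one-parameter family of solutions, and the two branches (\ref{1.8}) and (\ref{1.9}) should correspond to the two natural choices of the free parameter ($\bar\alpha_1$ in the first, $\bar\beta_2$ in the second).

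First I would use the two linear relations in (\ref{rnka}) to eliminate $\bar\beta_1$ and $\bar\beta_2$, writing
\[
\bar\beta_1 = 1-\alpha-\bar\alpha_1,\qquad \bar\beta_2 = 1-\beta-\bar\alpha_2.
\]
Substituting these into the determinant condition $\bar\alpha_1\bar\beta_2-\bar\alpha_2\bar\beta_1=1$, the cross term $\bar\alpha_1\bar\alpha_2$ cancels and one is left with the single linear relation
\[
\bar\alpha_1(1-\beta) - \bar\alpha_2(1-\alpha) = 1.
\]
This is the only real content of the lemma beyond what has been established already.

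Next I would distinguish the two cases. If $\alpha\neq 1$, this relation can be solved for $\bar\alpha_2$, yielding $\bar\alpha_2=[\bar\alpha_1(1-\beta)-1]/(1-\alpha)$ with $\bar\alpha_1$ free; back-substitution into the expressions for $\bar\beta_1$ and $\bar\beta_2$ then reproduces (\ref{1.8}) after simplifying the numerator of $\bar\beta_2$ as $(1-\beta)(1-\alpha)-\bar\alpha_1(1-\beta)+1 = (1-\beta)(1-\alpha-\bar\alpha_1)+1$. Symmetrically, if $\beta\neq 1$ I would take $\bar\beta_2$ as the free parameter, use $\bar\alpha_2=1-\beta-\bar\beta_2$, solve the linear relation for $\bar\alpha_1$, and compute $\bar\beta_1 = 1-\alpha-\bar\alpha_1$, which after simplification yields (\ref{1.9}).

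There is no genuine obstacle in the argument; the only point that warrants care is the bookkeeping when simplifying the numerators of $\bar\beta_2$ in (\ref{1.8}) and of $\bar\beta_1$ in (\ref{1.9}), and verifying that the two branches do cover the full solution set (the excluded case $\alpha=\beta=1$ corresponds to the reduced equation $0=1$, which has no solutions and hence no monomial Darboux coordinates of the ansatz form exist, consistent with the hypothesis of the lemma).
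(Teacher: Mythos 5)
Your proposal is correct and follows essentially the same route as the paper, which derives the three conditions (\ref{rnka}) from (\ref{1.6}) by direct calculation and then simply states that solving them "leads to" the lemma; you supply exactly the omitted elimination step (using the two linear relations to reduce the determinant condition to $\bar\alpha_1(1-\beta)-\bar\alpha_2(1-\alpha)=1$ and solving in each branch), and your algebra reproduces (\ref{1.8}) and (\ref{1.9}) correctly, including the observation that $\alpha=\beta=1$ forces $0=1$.
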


For $\alpha=\beta=1$ there is no transformations to Darboux coordinates of
(\ref{1.2}) within the ansatz (\ref{1.7}). An example of transformation that
leads to Darboux coordinates in this case is $\bar{\lambda}=\ln\left\vert
\lambda\right\vert ,\bar{\mu}=\ln\left\vert \mu\right\vert .$

Notice that the solutions (\ref{1.8})\ are one-parameter, parametrized by
$\bar{\alpha}_{1}$; similarly, the solutions (\ref{1.9}) are one-parameter,
and parametrized by $\bar{\beta}_{2}$. Note also that the last equation in
(\ref{rnka}) means that not only the map (\ref{1.7}) but also its inverse
(\ref{1.7a}) are in this case polynomial maps.

In the special case that $\alpha=\beta=0$, (that is if the original variables
$(\lambda,\mu)$ are already canonical for $\pi$) the transformation
(\ref{1.7}) given by (\ref{1.8}) represents a one-parameter family of
canonical transformations
\begin{equation}
\bar{\lambda}=\lambda^{\bar{\alpha}_{1}}\mu^{\bar{\alpha}_{1}-1}%
,\ \ \ \ \bar{\mu}=\lambda^{1-\bar{\alpha}_{1}}\mu^{2-\bar{\alpha}_{1}},
\label{1.11}%
\end{equation}
(parametrized by $\bar{\alpha}_{1}$) with the inverse%
\begin{equation}
\ \lambda=\bar{\lambda}^{2-\bar{\alpha}_{1}}\bar{\mu}^{1-\bar{\alpha}%
_{1}},\ \ \ \mu=\bar{\lambda}^{\bar{\alpha}_{1}-1}\bar{\mu}^{\bar{\alpha}_{1}%
}. \label{1.11a}%
\end{equation}
Applying Lemma \ref{D} to two different sets of Darboux coordinates: $\left(
\ \bar{\lambda},\ \bar{\mu}\right)  $ and $(\tilde{\lambda},\tilde{\mu})$,
given by (\ref{1.8}) with $\bar{\alpha}_{1}$and $\tilde{\alpha}_{1}$
respectively, we arrive at the following lemma.

\begin{lemma}
\label{DD}Assume, that $(\bar{\lambda},\bar{\mu})$ and $(\tilde{\lambda
},\tilde{\mu})$ are two different sets of Darboux coordinates for the same
Poisson tensor (\ref{1.2}) with $c=\lambda^{\alpha}\mu^{\beta}$ related to
$(\lambda,\mu)$ by two different solutions (\ref{1.8}) given by $\bar{\alpha
}_{1}$ and by $\tilde{\alpha}_{1}$ respectively. Then the map $(\bar{\lambda
},\bar{\mu})$\ $\longrightarrow(\tilde{\lambda},\tilde{\mu})$ is canonical and
takes the form
\begin{equation}
\bar{\lambda}=\tilde{\lambda}^{\alpha_{1}}\tilde{\mu}^{\alpha_{1}%
-1},\ \ \ \ \bar{\mu}=\tilde{\lambda}^{1-\alpha_{1}}\tilde{\mu}^{2-\alpha_{1}%
},\ \ \ \ \ \ \ \alpha_{1}=1+\frac{\bar{\alpha}_{1}-\tilde{\alpha}_{1}%
}{1-\alpha}\text{, \ \ \ }\alpha\neq1, \label{1.12}%
\end{equation}%
\begin{equation}
\bar{\lambda}=\tilde{\lambda}^{\alpha_{2}}\tilde{\mu}^{\alpha_{2}%
-1},\ \ \ \ \bar{\mu}=\tilde{\lambda}^{1-\alpha_{2}}\tilde{\mu}^{2-\alpha_{2}%
},\ \ \ \ \ \ \ \alpha_{2}=1+\frac{\bar{\alpha}_{2}-\tilde{\alpha}_{2}%
}{1-\beta}\text{, \ \ \ }\beta\neq1. \label{1.121}%
\end{equation}

\end{lemma}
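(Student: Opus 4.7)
My plan is to derive the lemma as a direct consequence of Lemma~\ref{D}, avoiding a raw substitution-and-solve approach as much as possible. The key observation is that the transition map $(\tilde{\lambda},\tilde{\mu}) \mapsto (\bar{\lambda},\bar{\mu})$ is automatically canonical, since by hypothesis both coordinate systems are Darboux for the same Poisson tensor $\pi$. Moreover, the ansatz (\ref{1.7}) is closed under composition and inversion (the inverse is computed in (\ref{1.7a})), so composing the two monomial maps $(\lambda,\mu)\mapsto(\bar{\lambda},\bar{\mu})$ and $(\tilde{\lambda},\tilde{\mu})\mapsto(\lambda,\mu)$ produces again a monomial transformation. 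Thus the transition map is a monomial canonical transformation.

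The first step is then to apply Lemma~\ref{D} in the special case $\alpha=\beta=0$, where the role of the original $(\lambda,\mu)$ is played by $(\tilde{\lambda},\tilde{\mu})$ (which are canonical for $\pi$ by assumption). In that case the general solution to the Darboux conditions (\ref{rnka}) is given by the one-parameter family (\ref{1.11}). Consequently, any monomial canonical transformation from $(\tilde{\lambda},\tilde{\mu})$ to another Darboux system must be of the form
\begin{equation*}
\bar{\lambda}=\tilde{\lambda}^{\alpha_{1}}\tilde{\mu}^{\alpha_{1}-1},\qquad
\bar{\mu}=\tilde{\lambda}^{1-\alpha_{1}}\tilde{\mu}^{2-\alpha_{1}},
\end{equation*}
for a single parameter $\alpha_{1}$. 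This already establishes the \emph{shape} of (\ref{1.12}); only the identification of $\alpha_{1}$ remains.

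The final step is to compute $\alpha_{1}$ by matching one exponent. I would invert $(\lambda,\mu)\mapsto(\tilde{\lambda},\tilde{\mu})$ using (\ref{1.7a}) and the identity $\tilde{\alpha}_{1}\tilde{\beta}_{2}-\tilde{\alpha}_{2}\tilde{\beta}_{1}=1$ from (\ref{rnka}), substitute the result into $\bar{\lambda}=\lambda^{\bar{\alpha}_{1}}\mu^{\bar{\alpha}_{2}}$, and read off the exponent of $\tilde{\lambda}$, namely $\bar{\alpha}_{1}\tilde{\beta}_{2}-\bar{\alpha}_{2}\tilde{\beta}_{1}$. Substituting the explicit expressions (\ref{1.8}) for $\bar{\alpha}_{2},\tilde{\beta}_{1},\tilde{\beta}_{2}$ in terms of $\bar{\alpha}_{1},\tilde{\alpha}_{1},\alpha,\beta$ and simplifying over the common denominator $1-\alpha$, the terms proportional to $(1-\beta)$ cancel and one obtains
\begin{equation*}
\alpha_{1}=\frac{\bar{\alpha}_{1}+1-\alpha-\tilde{\alpha}_{1}}{1-\alpha}=1+\frac{\bar{\alpha}_{1}-\tilde{\alpha}_{1}}{1-\alpha},
\end{equation*}
which is exactly the formula in (\ref{1.12}). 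The case $\beta\neq 1$ in (\ref{1.121}) is obtained by the parallel argument using (\ref{1.9}) instead of (\ref{1.8}).

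The only genuinely computational step is the cancellation in the last paragraph, but it is forced by the structural identity $\bar{\alpha}_{1}\bar{\beta}_{2}-\bar{\alpha}_{2}\bar{\beta}_{1}=1$; I do not anticipate any serious obstacle. The conceptual payoff of this order of argument is that the canonical nature of the map and its one-parameter monomial form are obtained for free from Lemma~\ref{D}, so that the direct calculation is needed only to pin down the explicit dependence of $\alpha_{1}$ on $\bar{\alpha}_{1}$ and $\tilde{\alpha}_{1}$.
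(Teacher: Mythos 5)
Your proposal is correct and follows essentially the same route as the paper, which likewise obtains the lemma by applying Lemma~\ref{D} and a direct calculation (the paper gives no more detail than ``the proof is again obtained by direct calculations''). Your additional observation --- that the composite map is a monomial canonical transformation between canonical coordinates and hence, by the ``only if'' part of Lemma~\ref{D} with $\alpha=\beta=0$, must already have the one-parameter form (\ref{1.11}), so that only the single exponent $\alpha_{1}=\bar{\alpha}_{1}\tilde{\beta}_{2}-\bar{\alpha}_{2}\tilde{\beta}_{1}$ needs to be computed --- is a sound and slightly more economical organization of that same calculation.
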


The proof is again obtained by direct calculations. The inverse of
(\ref{1.12}) is of the form%
\begin{equation}
\ \ \tilde{\lambda}=\bar{\lambda}^{2-\alpha_{1}}\bar{\mu}^{1-\alpha_{1}%
},\ \ \ \ \bar{\mu}=\bar{\lambda}^{\alpha_{1}-1}\bar{\mu}^{\alpha_{1}},
\label{1.12a}%
\end{equation}
while the inverse of (\ref{1.121}) is%
\[
\ \ \tilde{\lambda}=\bar{\lambda}^{2-\alpha_{2}}\bar{\mu}^{1-\alpha_{2}%
},\ \ \ \ \bar{\mu}=\bar{\lambda}^{\alpha_{2}-1}\bar{\mu}^{\alpha_{2}}.
\]

\section{From algebraic curves to Liouville integrable and separable
Hamiltonian systems\label{s2}}

\subsection{Liouville integrability}

Here we show how to construct $n$-dimensional Liouville integrable Hamiltonian
systems starting from a single $n$-parameter algebraic curve on a
$(\lambda,\mu)$-plane of the form%
\begin{equation}
\varphi(\lambda,\mu,a_{1},...,a_{n})=0. \label{2.1}%
\end{equation}
Taking $n$ copies of \eqref{2.1} with $(\lambda,\mu)$ consecutively
substituted by the pair of variables $(\lambda_{i},\mu_{i})$ we obtain the
system of $n$ equations
\begin{equation}
\varphi(\lambda_{i},\mu_{i},a_{1},...,a_{n})=0,\qquad i=1,\ldots,n \label{2.2}%
\end{equation}
that is assumed to be solvable with respect to the parameters $a_{k}$ (at
least in some open domain). In result, we obtain $n$ functions (Hamiltonians)
$a_{k}=h_{k}(\bm{\lambda},\bm{\mu})$ on $2n$-dimensional manifold $M$,
parametrized by coordinates $\bm{\lambda }=(\lambda_{1},...,\lambda_{n})$,
$\bm{\mu}=(\mu_{1},...,\mu_{n})$.

In order to turn manifold $M$ into Poisson manifold we take $n$ copies of the
Poisson operator (\ref{1.2}) on the plane and construct the Poisson tensor
$\pi$ on $M$ as follows:
\begin{equation}
\pi=\sum_{i=1}^{n}c(\lambda_{i},\mu_{i})\frac{\partial}{\partial\lambda_{i}%
}\wedge\frac{\partial}{\partial\mu_{i}}, \label{2.3}%
\end{equation}
so that its matrix representation is
\begin{equation}
\pi=\left(
\begin{array}
[c]{cc}%
\bm{0}_{n\times n} & c(\bm{\lambda},\bm{\mu})\\
-c(\bm{\lambda},\bm{\mu}) & \bm{0}_{n\times n}%
\end{array}
\right)  ,\ \ \ \ c(\bm{\lambda},\bm{\mu})=\mathrm{diag}(c(\lambda_{1},\mu
_{1}),...,\ c(\lambda_{n},\mu_{n})). \label{2.4}%
\end{equation}
Below we prove $h_{i}$ generated by (2.2) commute with respect to $\pi$ given
by (\ref{2.3}), which is a natural generalization of the result with $c=1$
that can be found for example in \cite{blasz2019}.

\begin{theorem}
\label{KOM}Hamiltonian functions $h_{i}$ are in involution with respect to
Poisson tensors (\ref{2.3})%
\begin{equation}
\{h_{i},h_{j}\}_{\pi}=0,\ \ \ \ i,j=1,...,n. \label{2.5}%
\end{equation}

\end{theorem}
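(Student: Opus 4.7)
The plan is to use implicit differentiation of the separation relations (\ref{2.2}) to obtain closed expressions for the partial derivatives of the Hamiltonians, and then substitute these directly into the bracket defined by (\ref{2.3}). Writing $\varphi_i := \varphi(\lambda_i,\mu_i,h_1,\ldots,h_n)$ and introducing the $n\times n$ Jacobian matrix $\Phi_{ik} := \partial\varphi_i/\partial h_k$, the key observation is that $\varphi_i$ depends on the variable $\lambda_j$ only through the indicator $\delta_{ij}$ and through the $h_k$'s. Differentiating the relation $\varphi_i=0$ with respect to $\lambda_j$ and with respect to $\mu_j$ therefore yields the linear systems
\begin{equation*}
\sum_{k=1}^{n}\Phi_{ik}\,\frac{\partial h_k}{\partial\lambda_j}=-\delta_{ij}\,\partial_{\lambda}\varphi_i,\qquad \sum_{k=1}^{n}\Phi_{ik}\,\frac{\partial h_k}{\partial\mu_j}=-\delta_{ij}\,\partial_{\mu}\varphi_i,
\end{equation*}
where $\partial_{\lambda}\varphi_i$ and $\partial_{\mu}\varphi_i$ denote the partial derivatives of $\varphi$ with respect to its first and second arguments, evaluated at the $i$-th copy. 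The solvability hypothesis on (\ref{2.2}) is exactly the statement that $\Phi$ is invertible on the relevant domain, so Cramer's rule gives
\begin{equation*}
\frac{\partial h_k}{\partial\lambda_j}=-(\Phi^{-1})_{kj}\,\partial_{\lambda}\varphi_j,\qquad \frac{\partial h_k}{\partial\mu_j}=-(\Phi^{-1})_{kj}\,\partial_{\mu}\varphi_j.
\end{equation*}

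I would then substitute these formulas into
\begin{equation*}
\{h_i,h_j\}_{\pi}=\sum_{k=1}^{n}c(\lambda_k,\mu_k)\left(\frac{\partial h_i}{\partial\lambda_k}\frac{\partial h_j}{\partial\mu_k}-\frac{\partial h_i}{\partial\mu_k}\frac{\partial h_j}{\partial\lambda_k}\right).
\end{equation*}
Each summand factors as $c(\lambda_k,\mu_k)\,(\Phi^{-1})_{ik}(\Phi^{-1})_{jk}\bigl[\,\partial_{\lambda}\varphi_k\cdot\partial_{\mu}\varphi_k-\partial_{\mu}\varphi_k\cdot\partial_{\lambda}\varphi_k\,\bigr]$, and the bracketed quantity vanishes identically by commutativity of scalar multiplication. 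This gives $\{h_i,h_j\}_{\pi}=0$ for all $i,j$.

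The argument is essentially bookkeeping in indices; the only step that requires care is recognizing that the same pair $(\Phi^{-1})_{ik},(\Phi^{-1})_{jk}$ appears in both products within the $k$-th summand, so that the antisymmetrization of the Poisson bracket produces cancellation pointwise in $k$. Because the cancellation is pointwise, the scalar weight $c(\lambda_k,\mu_k)$ is irrelevant to it, which is precisely why the proof known for the canonical case $c\equiv 1$ (as in \cite{blasz2019}) extends verbatim to the more general Poisson tensor (\ref{2.3}). No use of Jacobi identity or of the lemmas of Section \ref{s1} is required for this step; they become relevant only for the subsequent transformation-of-coordinate results in Section \ref{s3}.
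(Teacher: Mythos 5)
Your proposal is correct and follows essentially the same route as the paper: implicit differentiation of the separation relations, inversion of the Jacobian with respect to the parameters, and substitution into the bracket so that the antisymmetry produces the cancellation. The only cosmetic difference is that you resolve the Kronecker delta (i.e.\ the fact that $\varphi_i$ depends only on the single pair $(\lambda_i,\mu_i)$) at the outset, so the cancellation is pointwise in $k$, whereas the paper packages the same observation as the vanishing of $\{\varphi_i,\varphi_j\}_{\pi}$ at the final step.
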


\begin{proof}
The Hamiltonian functions $h_{i}(\bm{\lambda},\bm{\mu})$ Poisson commute as a
consequence of relations (\ref{2.2}). Indeed, differentiating equations
(\ref{2.2}) with respect to $(\bm{\lambda},\bm{\mu})$ coordinates we get
\[
\frac{\partial\varphi_{k}}{\partial\lambda_{i}}+\sum_{r=1}^{n}\frac
{\partial\varphi_{k}}{\partial a_{r}}\frac{\partial h_{r}}{\partial\lambda
_{i}}=0,\ \ \ \ \ \ \frac{\partial\varphi_{k}}{\partial\mu_{i}}+\sum_{r=1}%
^{n}\frac{\partial\varphi_{k}}{\partial a_{r}}\frac{\partial h_{r}}%
{\partial\mu_{i}}=0,
\]
so
\[
\frac{\partial h_{r}}{\partial\lambda_{i}}=-\sum_{s=1}^{n}A_{s}^{r}%
\frac{\partial\varphi_{k}}{\partial\lambda_{i}},\ \ \ \frac{\partial h_{r}%
}{\partial\mu_{i}}=-\sum_{s=1}^{n}A_{s}^{r}\frac{\partial\varphi_{k}}%
{\partial\mu_{i}},
\]
where $(A_{s}^{r})$ is a matrix being the inverse of the matrix $(\partial
\varphi_{s}/\partial a_{r})$. In consequence
\begin{align*}
\{h_{r},h_{s}\}  &  =\sum_{k=1}^{n}\left(  \frac{\partial h_{r}}%
{\partial\lambda_{k}}\frac{\partial h_{s}}{\partial\mu_{k}}-\frac{\partial
h_{r}}{\partial\mu_{k}}\frac{\partial h_{s}}{\partial\lambda_{k}}\right)
c_{k}\\
&  =\sum_{k=1}^{n}\left(  \sum_{i,j=1}^{n}A_{i}^{r}\frac{\partial\varphi_{i}%
}{\partial\lambda_{k}}A_{j}^{s}\frac{\partial\varphi_{j}}{\partial\mu_{k}%
}-\sum_{i,j=1}^{n}A_{i}^{r}\frac{\partial\varphi_{i}}{\partial\mu_{k}}%
A_{j}^{s}\frac{\partial\varphi_{j}}{\partial\lambda_{k}}\right)  c_{k}\\
&  =\sum_{i,j=1}^{n}A_{i}^{r}A_{j}^{s}\sum_{k=1}^{n}\left(  \frac
{\partial\varphi_{i}}{\partial\lambda_{k}}\frac{\partial\varphi_{j}}%
{\partial\mu_{k}}-\frac{\partial\varphi_{i}}{\partial\mu_{k}}\frac
{\partial\varphi_{j}}{\partial\lambda_{k}}\right)  c_{k}\\
&  =\sum_{i,j=1}^{n}A_{i}^{r}A_{j}^{s}\{\varphi_{i},\varphi_{j}\}_{\pi}=0,
\end{align*}
where $c_{k}=c(\lambda_{k},\mu_{k})$.
\end{proof}

In result, the system of $n$ evolution equations on $M$
\begin{equation}
\xi_{t_{i}}=\pi dh_{i}=X_{i},\ \ \ i=1,...,n, \label{2.6}%
\end{equation}
where $\xi=(\bm{\lambda},\bm{\mu})^{T},$ is Liouville integrable.

\subsection{Separability\label{s2.2}}

Liouville integrable systems generated by algebraic curves (\ref{2.1}) and the
Poisson tensor (\ref{1.2}) with $c=1$ are separable in the sense of
Hamilton-Jacobi theory, $\left(  \bm{\lambda},\bm{\mu}\right)  $ are then
their separation coordinates and equations (\ref{2.2}) are called separation
relations. Indeed, the Hamiltonian system (\ref{2.6}) is in this case
linearized through a canonical transformation
\begin{equation}
(\bm{\lambda},\bm{\mu})\longrightarrow(\bm{\beta},\bm{\alpha}), \label{3.5}%
\end{equation}
generated by a generating function $W(\bm{\lambda },\bm{\alpha}),$ such that
it satisfies all the Hamilton-Jacobi equations $h_{i}=\alpha_{i}$ of the
system. Then, the transformation (\ref{3.5}) is given implicitly by
\begin{equation}
\beta_{i}=\frac{\partial W}{\partial\alpha_{i}},\ \ \ \mu_{i}=\frac{\partial
W}{\partial\lambda_{i}},\ \ \ \ \ i=1,...,n. \label{3.6}%
\end{equation}
In the variables $(\bm{\beta},\bm{\alpha})$ the $n$ evolution equations
(\ref{2.6}) linearize
\begin{equation}
(\beta_{j})_{t_{i}}=\frac{\partial h_{i}}{\partial\alpha_{j}}=\delta
_{ij},\ \ \ \ \ (\alpha_{j})_{t_{i}}=-\frac{\partial h_{i}}{\partial\beta_{j}%
}=0,\ \ \ \ i,j=1,...,n, \label{3.7}%
\end{equation}
so that
\begin{equation}
\beta_{j}(\bm{\lambda},\bm{\alpha})=\frac{\partial W}{\partial\lambda_{j}%
}=t_{j}+c_{j},\ \ \ \ \ \ c_{j}\in\mathbb{R}. \label{3.8}%
\end{equation}
The existence of separation relations (\ref{2.2}) means that there always
exists an additively separable solution
\begin{equation}
W(\bm{\lambda},\bm{\alpha})=\sum_{i=1}^{n}W_{i}(\lambda_{i},\bm{\alpha}),
\label{3.9}%
\end{equation}
for the generating function $W(\bm{\lambda},\bm{\alpha})$, where functions
$W_{i}(\lambda_{i},\bm{\alpha})$ are solutions of the system of $n$ decoupled
ordinary differential equations
\begin{equation}
\varphi_{i}\left(  \lambda_{i},\frac{dW_{i}(\lambda_{i},\bm{\alpha})}%
{d\lambda_{i}},\bm{\alpha}\right)  =0,\ \ \ \ \ i=1,...,n. \label{3.10}%
\end{equation}

In literature, Liouville integrable systems, linearizable according to
Hamilton-Jacobi theory by additively separable generating function (\ref{3.9})
are known as (generalized) St\"{a}ckel systems. A particularly interesting
case of such systems (a proper St\"{a}ckel system) is generated by $\varphi$
being of hyperelliptic type
\begin{equation}
\varphi(\lambda,\mu)\equiv\sigma(\lambda)+\sum_{k=1}^{n}h_{k}\lambda
^{\gamma_{k}}-\frac{1}{2}f(\lambda)\mu^{2}=0,\ \label{2.7}%
\end{equation}
where $\sigma(\lambda)$ and $f(\lambda)$ are Laurent polynomials in $\lambda$,
$\gamma_{k}\in\mathbb{N}$ and are such that $\gamma_{1}>...>\gamma_{n}=0$.
Then,
\begin{equation}
h_{k}=\frac{1}{2}\bm{\mu}A_{k}G_{f}\bm{\mu}^{T}+V_{k}^{(\sigma)},\qquad
k=1,\ldots,n \label{S3}%
\end{equation}
and some additional geometric structure can be related with the dynamical
systems (\ref{2.6}). The Hamiltonians $h_{k}$ are considered as functions on
the phase space $M=T^{\ast}Q$, where $\bm{\lambda}$ are local coordinates on a
$n$-dimensional configuration space $Q$ and $\bm{\mu}$ are the (fibre)
momentum coordinates, $G_{f}$ is treated as a contravariant metric on $Q$,
defined by the first Hamiltonian $h_{1}$, $A_{k}$ $(A_{1}=I)$ are $(1,1)$-
Killing tensors for the metric $G_{f}$ (for any $f$) and $V_{k}^{(\sigma)}$
are respective potentials on $Q$. The quadratic in momenta $\bm{\mu}$
Hamiltonians \eqref{S3} are in involution with respect to the Poisson operator
$\pi=\sum_{i=1}^{n}\frac{\partial}{\partial\lambda_{i}}\wedge\frac{\partial
}{\partial\mu_{i}}$, in accordance with the general Theorem \ref{KOM}\ in the
subsection above. By construction, the variables $(\bm{\lambda},\bm{\mu})$ are
separation variables for all the St\"{a}ckel Hamiltonians $h_{k}$ in (\ref{S3}).

If we further assume that $\gamma_{k}=n-k$ then the Hamiltonians $h_{k}$ in
(\ref{S3}) become the so-called St\"{a}ckel Hamiltonians of Benenti type
\cite{ben1,ben2,blaszak2007} and in this case
\[
G_{f}=\,\mathrm{diag}\left(  \frac{f(\lambda_{1})}{\Delta_{1}},\ldots
,\frac{f(\lambda_{n})}{\Delta_{n}}\right)  ,\qquad\Delta_{i}=\prod_{j\neq
i}(\lambda_{i}-\lambda_{j}).
\]
Further, $A_{k}$ are given by
\[
A_{k}=(-1)^{k+1}\mathrm{diag}\left(  \frac{\partial s_{k}}{\partial\lambda
_{1}},\ldots,\frac{\partial s_{k}}{\partial\lambda_{n}}\right)  \text{
\ \ \ \ \ }k=1,\ldots,n.
\]
They all are $(1,1)$-Killing tensors for all the metrics $G_{f}$. The function
$s_{k}$ is the elementary symmetric polynomial in $\lambda_{i}$ of degree $k$.
The potential functions $V_{k}^{(\sigma)}$ are given by
\begin{equation}
V_{k}=(-1)^{k+1}\sum_{i=1}^{n}\frac{\partial s_{k}}{\partial\lambda_{i}}%
\frac{\sigma(\lambda_{i})}{\Delta_{i}}\text{ \ \ \ \ }k=1,2,\ldots\label{6.11}%
\end{equation}
and for $\sigma(\lambda)=\sum_{i}a_{i}\lambda^{i}$ take the form $V_{k}%
=\sum_{i}a_{i}\mathcal{V}_{k}^{(i)}$, where the so-called elementary separable
potentials $\mathcal{V}_{k}^{(i)}$ can be explicitly constructed from the
recursion formula \cite{blaszak2011}
\begin{equation}
\mathcal{V}^{(i)}=R^{i}\mathcal{V}^{(0)},\qquad\mathcal{V}^{(i)}%
=\bigl(\mathcal{V}_{1}^{(i)},\ldots,\mathcal{V}_{n}^{(i)}\bigr)^{T}%
,\qquad\mathcal{V}^{(0)}=(0,\ldots,0,-1)^{T}, \label{prec}%
\end{equation}
where
\begin{equation}
R=%
\begin{pmatrix}
-q_{1} & 1 & 0 & 0\\
\vdots & 0 & \ddots & 0\\
\vdots & 0 & 0 & 1\\
-q_{n} & 0 & 0 & 0
\end{pmatrix}
,\qquad R^{-1}=%
\begin{pmatrix}
0 & 0 & 0 & -\frac{1}{q_{n}}\\
1 & 0 & 0 & \vdots\\
0 & \ddots & 0 & \vdots\\
0 & 0 & 1 & -\frac{q_{n-1}}{q_{n}}%
\end{pmatrix}
,\qquad q_{i}\equiv(-1)^{i}s_{i}. \label{V5}%
\end{equation}

\section{Equivalence classes of algebraic curves \label{s3}}

From results of Section \ref{s1} it follows that without loss of generality we
can restrict our construction to Poisson tensors for which coordinates
$(\bm{\lambda},\bm{\mu})$ are canonical coordinates, i.e. when
$c(\bm{\lambda},\bm{\mu})=I$, where $I$ is $n$-dimensional identity matrix. It
follows from the fact that for fixed $c(\bm{\lambda },\bm{\mu})$ we have the
whole family of transformations
\begin{equation}
\bar{\lambda}_{i}=a(\lambda_{i},\mu_{i}),\ \ \ \ \ \bar{\mu}_{i}=b(\lambda
_{i},\mu_{i}),\ \ \ \ i=1,...,n, \label{3.1}%
\end{equation}
where $(\bar{\bm{\lambda}},\bar{\bm{\mu}})$ are canonical coordinates for
Poisson tensor (\ref{2.3}), i.e. they fulfil the condition (\ref{1.6}) for
each pair $(\lambda_{i},\mu_{i})$ of coordinates. Thus, each Liouville
integrable Hamiltonian system (\ref{2.6}), generated by an algebraic curve
(\ref{2.1}) and by the Poisson tensor (\ref{2.3}) with a given $c(\lambda
,\mu)$, can in fact be generated by a whole family (equivalence class) of
algebraic curves $\varphi(\bar{\lambda},\bar{\mu},\bar{h}_{1},...,\bar{h}%
_{n})=0$ and the corresponding Poisson tensors with $c(\bar{\lambda},\bar{\mu
})=1$. Each class represents thus the same dynamical system written in
different Darboux coordinates, related by appropriate canonical transformations.

Let us specify these considerations to the monomial case, following the case
considered in Section \ref{s1}. Consider thus the $2n$-dimensional Hamiltonian
system (\ref{2.6}) generated by the algebraic curve (\ref{2.1}) and by the
Poisson tensor (\ref{2.3}) with $c(\lambda_{i},\mu)=\lambda_{i}^{\alpha}%
\mu_{i}^{\beta}$, with fixed real $\alpha$ and $\beta$. Then, the
transformation to its canonical (Darboux) coordinates on $M$ and its inverse
are of the form (cf. (\ref{1.11}) and (\ref{1.11a}))%
\begin{equation}
\ \bar{\lambda}_{i}=\lambda_{i}^{\bar{\alpha}_{1}}\mu_{i}^{\bar{\alpha}_{2}%
},\ \ \ \ \bar{\mu}_{i}=\lambda_{i}^{\bar{\beta}_{1}}\mu_{i}^{\bar{\beta}_{2}%
},\ \ \ \ \ \lambda_{i}=\bar{\lambda}_{i}^{\bar{\beta}_{2}}\bar{\mu}%
_{i}^{-\bar{\alpha}_{2}},\ \ \ \ \mu_{i}=\bar{\lambda}_{i}^{-\bar{\beta}_{1}%
}\bar{\mu}_{i}^{\bar{\alpha}_{1}}\ \ \ \ i=1,...,n, \label{3.2}%
\end{equation}
where $\bar{\alpha}_{1},\bar{\alpha}_{2},\bar{\beta}_{1},\bar{\beta}_{2}$ are
given either by (\ref{1.8}) (for $\alpha\neq1,$ parameterized by $\bar{\alpha
}_{1}$) or by (\ref{1.9}) (for $\beta\neq1,$ parametrized by $\bar{\beta}_{2}%
$). As a result, our system can be equivalently obtained by either one of the
equivalent curves
\begin{equation}
\varphi(\bar{\lambda},\bar{\mu},\bar{h}_{1},...,\bar{h}_{n})=0 \label{3.3}%
\end{equation}
expressed in coordinates $(\bar{\lambda},\bar{\mu})$ parametrized by
$\bar{\alpha}_{1}$ (in case $\alpha\neq1$) or by $\bar{\beta}_{2}$ (in case
$\beta\neq1$). A canonical transformation and its inverse between coordinates
$(\bar{\bm{\lambda}},\bar{\bm{\mu}})$ and $(\tilde{\bm{\lambda}}%
,\tilde{\bm{\mu}})$ associated with two curves from the class (\ref{3.3})
parametrized by $\bar{\alpha}_{1}$ and by $\tilde{\alpha}_{1}$ respectively,
is given by (cf. (\ref{1.12}) and (\ref{1.12a}))
\begin{equation}
\bar{\lambda}_{i}=\tilde{\lambda}_{i}^{\alpha_{1}}\ \tilde{\mu}_{i}%
^{\alpha_{1}-1},\ \ \ \ \bar{\mu}_{i}=\tilde{\lambda}_{i}^{1-\alpha_{1}}%
\tilde{\mu}^{2-\alpha_{1}},\ \ \ \ \tilde{\lambda}_{i}=\bar{\lambda}%
_{i}^{2-\alpha_{1}}\bar{\mu}_{i}^{1-\alpha_{1}},\ \ \ \ \bar{\mu}_{i}%
=\bar{\lambda}_{i}^{\alpha_{1}-1}\bar{\mu}_{i}^{\alpha_{1}},\ \ \ i=1,...,n,
\label{3.4}%
\end{equation}
where
\[
\ \alpha_{1}=1+\frac{\bar{\alpha}_{1}-\tilde{\alpha}_{1}}{1-\alpha}.
\]

\begin{example}
Let us consider dynamical system (\ref{2.6}), generated by the algebraic curve
(\ref{2.7}) but in the more general case when the Poisson tensor (\ref{1.2})
is given by the monomial $c(\lambda,\mu)=\lambda^{m},\ m\in\mathbb{Z}$ (so
that $\alpha=m$ and $\beta=0$) on the $(\lambda,\mu)$-plane. Such system has
one-parameter family of canonical representations in new coordinates, induced
by the transformation
\begin{equation}
\bar{\lambda}=\lambda^{a}\mu^{\frac{a-1}{1-m}},\ \ \ \bar{\mu}=\lambda
^{1-m-a}\mu^{\frac{2-m-a}{1-m}},\ \ \ \ \ \ a\in\mathbb{R}, \label{4.1}%
\end{equation}
with the inverse%
\[
\lambda=\bar{\lambda}^{\frac{2-m-a}{1-m}}\ \bar{\mu}^{\frac{1-a}{1-m}%
},\ \ \ \mu=\bar{\lambda}^{a+m-1}\bar{\mu}^{a}%
\]
(where we now denote $\bar{\alpha}_{1}$by $a$). Notice that for the
distinguished choice $a=1$ we have $\bar{\lambda}=\lambda,\ \ \ \bar{\mu
}=\lambda^{-m}\mu$ (so that$\ \lambda=\bar{\lambda},\ \ \ \mu=\bar{\lambda
}^{m}\bar{\mu}$) and the algebraic curve (\ref{2.7}) in the new variables
$(\bar{\lambda},\bar{\mu})$ is still of hyperelliptic type
\begin{equation}
\sigma(\bar{\lambda})+\sum_{k=1}^{n}\bar{h}_{k}\bar{\lambda}^{\gamma_{k}%
}=\frac{1}{2}\bar{f}(\bar{\lambda})\bar{\mu}^{2},\ \label{4.2}%
\end{equation}
where $\bar{f}(\bar{\lambda})=f(\bar{\lambda})\bar{\lambda}^{2m}$, with the
canonical Poisson tensor as $c(\bar{\lambda},\bar{\mu})=1$ and thus generates
a St\"{a}ckel system with all Hamiltonians that are again quadratic in momenta%
\begin{equation}
\bar{h}_{k}=\frac{1}{2}\bar{\bm{\mu}}^{T}A_{k}G_{\bar{f}}\bar{\bm{\mu}}%
+V_{k}^{(\sigma)}(\bar{\bm{\lambda}}),\qquad k=1,\ldots,n. \label{4.3}%
\end{equation}
For the choice $a=0$%
\begin{equation}
\bar{\lambda}=\mu^{\frac{1}{m-1}},\ \ \ \bar{\mu}=\lambda^{1-m}\mu^{\frac
{2-m}{1-m}},\ \ \ \ \ \lambda=\bar{\lambda}^{\frac{2-m}{1-m}}\bar{\mu}%
^{\frac{1}{1-m}},\ \ \ \mu=\bar{\lambda}^{m-1}, \label{4.4}%
\end{equation}
so for the particular case $m=2$%
\begin{equation}
\bar{\lambda}=\mu,\ \ \ \bar{\mu}=\lambda^{-1},\ \ \ \ \ \lambda=\bar{\mu
}^{-1},\ \ \ \mu=\bar{\lambda}, \label{4.5}%
\end{equation}
we again obtained the hyperelliptic type curve, this time with interchanged
roles of position and momenta variables:%
\[
\bar{\sigma}(\bar{\mu})+\sum_{k=1}^{n}\bar{h}_{k}\bar{\mu}^{\bar{\gamma}_{k}%
}=\frac{1}{2}\bar{f}(\bar{\mu})\bar{\lambda}^{2},
\]
where%
\begin{equation}
\bar{\sigma}(\bar{\mu})=\sigma(\bar{\mu}^{-1})\ \bar{\mu}^{\gamma_{1}%
},\ \ \ \ \bar{f}(\bar{\mu})=f(\bar{\mu}^{-1})\ \bar{\mu}^{\gamma_{1}%
},\ \ \ \bar{\gamma}_{k}=\gamma_{1}-\gamma_{k},\ \ \ k=1,...,n \label{4.6}%
\end{equation}
and with the normalization $0=\bar{\gamma}_{1}<...<\bar{\gamma}_{n}$.
\end{example}

\section{St\"{a}ckel transform and reciprocal link\label{s4}}

In this section we will assume that the algebraic curve defining Hamiltonian
system depends on a set of $n+n$, instead of just $n$, parameters. We show
that solving this curve with respect to either the first set of $n$ parameters
or the second set of $n$ parameters leads to two integrable systems that can
be related by a St\"{a}ckel transform. We further show that solutions of these
two systems are related by a reciprocal (multi-time) transformation. We
further specify our results to St\"{a}ckel systems.

Consider thus a $2n$-parameter algebraic curve
\begin{equation}
\varphi(\lambda,\mu,a_{1},...,a_{n},b_{1},...,b_{n})=0 \label{6.1}%
\end{equation}
and the corresponding separation relations
\begin{equation}
\varphi(\lambda_{i},\mu_{i},a_{1},...,a_{n},b_{1},...,b_{n})=0,\qquad
i=1,\ldots,n. \label{6.2}%
\end{equation}
Solving these relations with respect to $a_{k}$ (we assume it is possible at
least in some open domain) we obtain $n$ functions (Hamiltonians)
\begin{equation}
a_{k}=h_{k}(\xi,b_{1},...,b_{n}),\ \ \ \ k=1,...,n, \label{6.3}%
\end{equation}
considered on a $2n$-dimensional manifold $M$ (parametrized by coordinates
$\xi=(\bm{\lambda},\bm{\mu})$) and depending on $n$ parameters $b_{1}%
,...,b_{n}.$ These Hamiltonians define $n$ Hamiltonians systems on $M$ of the
form
\begin{equation}
\xi_{t_{i}}=\pi dh_{i}\equiv X_{i},\ \ \ i=1,...,n, \label{6.3a}%
\end{equation}
where $\pi$ is the canonical Poisson tensor of co-rank zero given by
\[
\pi=\sum_{i=1}^{n}\frac{\partial}{\partial\lambda_{i}}\wedge\frac{\partial
}{\partial\mu_{i}}%
\]
and where $t_{1},\ldots,t_{n}$ are respective evolution parameters. The system
(\ref{6.2}) (or equivalently (\ref{6.3}))\ is assumed to be also solvable (at
least in some open domain) with respect to the parameters $b_{k}$ yielding
\begin{equation}
b_{k}=\bar{h}_{k}(\xi,a_{1},...,a_{n}),\ \ \ \ k=1,...,n, \label{6.4}%
\end{equation}
i.e. new Hamiltonian functions $\bar{h}_{k}$ depending on $n$ parameters
$a_{1},...,a_{n}$. The related Hamiltonian systems take the form
\begin{equation}
\xi_{\bar{t}_{i}}=\pi d\bar{h}_{i}\equiv\bar{X}_{i},\ \ \ i=1,...,n,
\label{6.5}%
\end{equation}
where $\bar{t}_{1},...,\bar{t}_{n}$ are respective evolution parameters.

Note that inserting the Hamiltonians $h_{k}$ into the separation curve
(\ref{6.1}) yields the following identity with respect to all $\xi\in M$ and
all $b_{k}$
\begin{equation}
\varphi(\lambda,\mu,h_{1}(\xi,b_{1},...,b_{n}),...,h_{n}(\xi,b_{1}%
,...,b_{n}),b_{1},...,b_{n})\equiv0. \label{6.5a}%
\end{equation}
Similarly, inserting the Hamiltonians $\bar{h}_{k}$ into the separation curve
(\ref{6.1}) yields the following identity with respect to all $\xi\in M$ and
all $a_{k}$
\begin{equation}
\varphi(\lambda,\mu,a_{1},...,a_{n},\bar{h}_{1}(\xi,a_{1},...,a_{n}%
),...,\bar{h}_{n}(\xi,a_{1},...,a_{n}))\equiv0. \label{6.5b}%
\end{equation}

\begin{definition}
The $n$ St\"{a}ckel Hamiltonians $\bar{h}_{k}$ and the $n$ St\"{a}ckel
Hamiltonians $h_{k}$ are called \emph{St\"{a}ckel conjugate Hamiltonians} and
the procedure of mapping $n$ Hamiltonians $h_{k}$ to $n$ Hamiltonians $\bar
{h}_{k}$ is called \emph{St\"{a}ckel transform}.
\end{definition}

Below we remind a theorem explaining the mutual geometric relations between
the St\"{a}ckel conjugate Hamiltonians.

\begin{theorem}
\cite{SAPM2012} For a given $2n$-tuple $(a,b)=(a_{1},...,a_{n},b_{1}%
,\ldots,b_{n})$ of real constants, on the $n$-dimensional submanifold
\begin{equation}
M_{a,b}=\{\xi\in M:\varphi(\lambda_{k},\mu_{k},a,b)=0,\text{ }k=1,...,n\}
\label{6.6a}%
\end{equation}
the following relations hold
\begin{equation}
dh=Ad\bar{h},\ \ \ \ \ X=A\bar{X},\ \ \ \ \ \ \ A_{ij}=-\frac{\partial h_{i}%
}{\partial b_{j}},\text{ \ }i,j=1,\ldots,n, \label{6.7}%
\end{equation}
where $dh=(dh_{1},...,dh_{n})^{T}$, $d\bar{h}=(d\bar{h}_{1},...,d\bar{h}%
_{n})^{T}$, $X=(X_{1},...,X_{n})^{T}$, $\bar{X}=(\bar{X}_{1},...,\bar{X}%
_{n})^{T}.$
\end{theorem}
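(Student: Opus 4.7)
The plan is to prove both relations simultaneously by differentiating the two identities (\ref{6.5a}) and (\ref{6.5b}) and then combining them, exactly in the spirit of the proof of Theorem \ref{KOM}. The key observation is that (\ref{6.5a}) is an identity valid in $(\xi,b)$ on all of $M$, while (\ref{6.5b}) is an identity in $(\xi,a)$; when we restrict to $M_{a,b}$ both sets of Hamiltonians $h_k$ and $\bar h_k$ take consistent numerical values and we may differentiate either identity.

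First I would differentiate (\ref{6.5a}) with respect to the phase-space coordinate $\xi_i$ and with respect to the parameter $b_j$. Writing $\varphi_k := \varphi(\lambda_k,\mu_k,a,b)$ this gives
\begin{equation*}
\frac{\partial\varphi_{k}}{\partial\xi_{i}}+\sum_{r=1}^{n}\frac{\partial\varphi_{k}}{\partial a_{r}}\frac{\partial h_{r}}{\partial\xi_{i}}=0,\qquad \frac{\partial\varphi_{k}}{\partial b_{j}}+\sum_{r=1}^{n}\frac{\partial\varphi_{k}}{\partial a_{r}}\frac{\partial h_{r}}{\partial b_{j}}=0.
\end{equation*}
The second relation, using the definition $A_{rj}=-\partial h_{r}/\partial b_{j}$, reads
\begin{equation*}
\frac{\partial\varphi_{k}}{\partial b_{j}}=\sum_{r=1}^{n}\frac{\partial\varphi_{k}}{\partial a_{r}}A_{rj}.
\end{equation*}
Then I would differentiate (\ref{6.5b}) with respect to $\xi_i$ to get, on $M_{a,b}$,
\begin{equation*}
\frac{\partial\varphi_{k}}{\partial\xi_{i}}+\sum_{r=1}^{n}\frac{\partial\varphi_{k}}{\partial b_{r}}\frac{\partial\bar h_{r}}{\partial\xi_{i}}=0.
\end{equation*}

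Equating the two expressions for $\partial\varphi_{k}/\partial\xi_{i}$ and substituting the expression for $\partial\varphi_{k}/\partial b_{r}$ above gives
\begin{equation*}
\sum_{r=1}^{n}\frac{\partial\varphi_{k}}{\partial a_{r}}\frac{\partial h_{r}}{\partial\xi_{i}}=\sum_{r,s=1}^{n}\frac{\partial\varphi_{k}}{\partial a_{s}}A_{sr}\frac{\partial\bar h_{r}}{\partial\xi_{i}}.
\end{equation*}
Since (as in Theorem \ref{KOM}) the matrix $\bigl(\partial\varphi_k/\partial a_s\bigr)$ is invertible on the open domain where the Hamiltonians $h_k$ are defined, I can cancel it and conclude $\partial h_s/\partial\xi_i=\sum_r A_{sr}\,\partial\bar h_r/\partial\xi_i$, i.e.\ $dh=A\,d\bar h$ on $M_{a,b}$. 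Applying the Poisson tensor $\pi$ to both sides and using linearity of $\pi$ immediately yields $X=\pi\,dh=A\,\pi\,d\bar h=A\bar X$, which establishes the second claimed identity.

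The only genuinely delicate point is justifying the non-singularity of $(\partial\varphi_k/\partial a_s)$ and $(\partial\varphi_k/\partial b_s)$ on (a dense open subset of) $M_{a,b}$; this is however exactly the implicit-function-theorem hypothesis already built into the definitions (\ref{6.3}) and (\ref{6.4}), so no further work beyond noting it should be required. Everything else is direct chain-rule manipulation of the two identities (\ref{6.5a}) and (\ref{6.5b}).
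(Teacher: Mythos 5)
Your argument is correct, and it is essentially the standard one: the paper itself does not prove this theorem but quotes it from \cite{SAPM2012}, where the proof proceeds exactly as you do, by differentiating the identities (\ref{6.5a}) and (\ref{6.5b}) with respect to $\xi$ and $b$, eliminating $\partial\varphi_k/\partial\xi_i$, and invoking the invertibility of $(\partial\varphi_k/\partial a_r)$ guaranteed by the solvability assumptions behind (\ref{6.3}) and (\ref{6.4}). The one point worth stating more explicitly is that the partial derivatives of $\varphi$ arising from (\ref{6.5a}) and from (\ref{6.5b}) are evaluated at the arguments $(\lambda_k,\mu_k,h(\xi,b),b)$ and $(\lambda_k,\mu_k,a,\bar h(\xi,a))$ respectively, and these coincide precisely on $M_{a,b}$ (where $h=a$ and $\bar h=b$) — which is exactly why the conclusion holds only on that submanifold; you note this, and the rest is the chain-rule manipulation you describe.
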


Note that $M_{a,b}$ can equivalently be defined as%
\begin{align*}
M_{a,b}&=\{\xi\in M:h_{k}(\xi,b_{1},...,b_{n})=a_{k},\ k=1,...,n\}\\
&=\{\xi\in
M:\bar{h}_{k}(\xi,a_{1},...,a_{n})=b_{k},\ k=1,...,n\}
\end{align*}
and that through each point of $M$ there pass infinitely many manifolds
$M_{a,b}$. In fact, fixing all the parameters $b_{k}$ in $M_{a,b}$ and letting
$a_{k}$ vary we obtain a particular foliation of $M$ and, likewise, fixing all
the parameters $a_{k}$ in $M_{a,b}$ and letting $b_{k}$ vary we obtain another
particular foliation of $M$. Note also that each of the manifolds $M_{a,b}$ is
invariant with respect to all $n$ systems (\ref{6.3a}) and all $n$ systems
(\ref{6.5}) which also means that all the vector fields $X_{i}$ and all the
vector fields $\bar{X}_{i}$ are tangent to each manifold $M_{a,b}$. Note that
no relation exists between the vector fields $X$ and $\bar{X}$ on the whole
manifold $M$. Let us also remark that the transformations (\ref{6.7}) on
$M_{a,b}$ can be inverted, yielding
\begin{equation}
d\bar{h}=A^{-1}dh,\ \ \ \ \ \bar{X}=A^{-1}X,\text{ \ with }\left(
A^{-1}\right)  _{ij}=-\frac{\partial\bar{h}_{i}}{\partial a_{j}}\text{,
\ \ }i,j=1,\ldots,n. \label{6.7a}%
\end{equation}
The second relation in (\ref{6.7}) can be reformulated in the dual language,
that of the reciprocal multi-time transformations. The reciprocal
transformation
\[
\bar{t}_{i}=\bar{t}_{i}(t_{1},...,t_{n},\xi_{0}),\ \ \ \ \ i=1,...,n,
\]
given on $M_{b,a}$ by
\begin{equation}
d\bar{t}=A^{T}dt, \label{6.8}%
\end{equation}
where $dt=(dt_{1},...,dt_{n})$ and $d\bar{t}=(d\bar{t}_{1},...,d\bar{t}_{n}),$
transforms the $n$-parameter solutions $\xi(t_{1},...,t_{n},\xi_{0})$ of the
system (\ref{6.3a}) (where $\xi_{0}$ is the initial condition) to the
$n$-parameter solutions $\bar{\xi}(\bar{t}_{1},...,\bar{t}_{n},\xi_{0})$ of
the system (\ref{6.5}). Transformation (\ref{6.8}) is well defined as its
r.h.s is an exact differential.

\begin{definition}
The transformation (\ref{6.8}) between dynamical systems (\ref{6.3a}) and
(\ref{6.5}) on $M_{a,b}$ is called a $n$-parameter reciprocal transform.
\end{definition}

In the remaining part of this section, we will restrict ourselves to the case
of curves (\ref{6.1}) that are affine in all the parameters $a_{i}$ and
$b_{i}$. In this case the relations (\ref{6.3}) take the form
\begin{equation}
a_{k}=h_{k}(\xi,b_{1},...,b_{n})\equiv H_{k}+\sum_{j=1}^{n}H_{k}^{(j)}%
b_{j},\ \ \ \ \ k=1,...,n, \label{6.8a}%
\end{equation}
while the relations (\ref{6.4})\ attain the form%
\begin{equation}
b_{k}=\bar{h}_{k}(\xi,a_{1},...,a_{n})=\bar{H}_{k}+\sum_{j=1}^{n}\bar{H}%
_{k}^{(j)}a_{j},\ \ \ \ k=1,...,n. \label{6.8c}%
\end{equation}
The St\"{a}ckel transform between the Hamiltonians $h_{k}$ and $\bar{h}_{k}$
takes the explicit matrix form
\begin{equation}
h=A(\bar{H}-b)\text{ or }\bar{h}=A^{-1}(H-a), \label{6.8b}%
\end{equation}
where $h=(h_{1},...,h_{n})^{T}$, $b=(b_{1},...,b_{n})^{T}$, $H=(H_{1}%
,...,H_{n})^{T}$, $\bar{h}=(\bar{h}_{1},\ldots,\bar{h}_{n})^{T}$,
$a=(a_{1},...,a_{n})^{T}$ and $A_{ij}=-\frac{\partial h_{i}}{\partial b_{j}%
}=-H_{i}^{(j)}$. Note that after setting all the $a_{i}$ and $b_{i}$ equal to
zero we obtain the following matrix formula relating Hamiltonians $H_{k}$ and
$\bar{H}_{k}$%
\begin{equation}
H=A\bar{H}, \label{6.8d}%
\end{equation}
where $\bar{H}=(\bar{H}_{1},...,\bar{H}_{n})^{T}$, valid on the whole $M$.
Formula (\ref{6.8d}) is the parameter-independent part of the St\"{a}ckel
transform between $h_{k}$ and $\bar{h}_{k}$.

Consider now a specification of the above affine case when the separation
curve (\ref{6.1}) attains the following hyperelliptic-type form
\begin{equation}
\sigma(\lambda)+\sum_{j=1}^{n}b_{j}\lambda^{\gamma_{j}}+\sum_{k=1}^{n}%
a_{k}\lambda^{n-k}=\frac{1}{2}f(\lambda)\mu^{2}, \label{6.9}%
\end{equation}
where $\sigma(\lambda)$ and $f(\lambda)$ are Laurent polynomials in $\lambda,$
$\gamma_{1}>...>\gamma_{n}$ are natural numbers. Solving the corresponding
separation relations with respect to $a_{k}$ yields the separable systems
belonging to the Benenti subclass of St\"{a}ckel systems. Explicitly, we
obtain $n$ quadratic in momenta St\"{a}ckel Hamiltonians
\begin{equation}
h_{k}=H_{k}+\sum_{j=1}^{n}b_{j}\mathcal{V}_{k}^{(\gamma_{j})}=\frac{1}%
{2}\mathbb{\mu}^{T}A_{k}G_{f}\mathbb{\mu}+V_{k}^{(\sigma)}+\sum_{j=1}^{n}%
b_{j}\mathcal{V}_{k}^{(\gamma_{j})},\qquad k=1,\ldots,n. \label{6.10}%
\end{equation}
The structure and geometric meaning of the Hamiltonians $h_{k}$ is as those
described in subsection \ref{s2.2}.

Performing the St\"{a}ckel transform on the set of $n$ Hamiltonians
(\ref{6.10}) we obtain the set of $n$ Hamiltonians $\bar{h}_{k}$ of the form%
\begin{equation}
\bar{h}_{k}=\bar{H}_{k}+\sum_{j=1}^{n}a_{j}\bar{\mathcal{V}}_{k}^{(n-j)}%
=\frac{1}{2}\mathbb{\mu}^{T}\bar{A}_{k}\bar{G}_{f}\mathbb{\mu}+\bar{V}%
_{k}^{(\sigma)}+\sum_{j=1}^{n}a_{j}\bar{\mathcal{V}}_{k}^{(n-j)},\qquad
k=1,\ldots,n, \label{5a}%
\end{equation}
(where $\bar{G}_{f}$ is defined by $\bar{H}_{1}$ with $\bar{A}_{1}=I$)
generated by the separation curve%
\begin{equation}
\sigma(\lambda)+\sum_{j=1}^{n}a_{j}\lambda^{n-j}+\sum_{k=1}^{n}\bar{h}%
_{k}\lambda^{\gamma_{k}}=\frac{1}{2}f(\lambda)\mu^{2}. \label{4}%
\end{equation}
The Hamiltonians $\bar{h}_{k}$ define the Hamiltonian evolution equations
(\ref{6.5}). Then, on $n$-dimensional submanifold (\ref{6.6a}) the relations
(\ref{6.7}) hold with
\begin{equation}
A_{kj}=-\frac{\partial h_{k}}{\partial b_{j}}=-\mathcal{V}_{k}^{(\gamma_{j})}
\label{8}%
\end{equation}
and the relations (\ref{6.7a}) hold with
\[
(A^{-1})_{kj}=-\frac{\partial\bar{h}_{k}}{\partial a_{j}}=-\bar{\mathcal{V}%
}_{k}^{(n-j)}.
\]

\begin{remark}
From the above considerations it follows that systems generated by algebraic
curves (\ref{4}) can always be transformed, by an appropriate reciprocal
transformation, to systems from Benenti class, generated by algebraic curves
(\ref{6.9}).
\end{remark}

The St\"{a}ckel systems generated by curves of the type (\ref{4}) have been
thoroughly studied in \cite{blaszak2005}.

\begin{example}
\label{E1}Consider the algebraic curve (\ref{6.9}) of the form%
\begin{equation}
b_{1}\lambda^{2}+a_{1}\lambda+b_{2}+a_{2}=\frac{1}{2}\lambda\mu^{2}%
+\lambda^{4}. \label{ziuta}%
\end{equation}
Solving the corresponding separation relations with respect to $a_{k}$ we
obtain Hamiltonians $h_{1}$ and $h_{2}$ as in (\ref{6.8a}). In Vi\'{e}te
coordinates $(q,p)$, associated with separable coordinates \thinspace
$(\lambda,\mu)$ through the point transformation%
\begin{align*}
q_{1}  &  =-\lambda_{1}-\lambda_{2}\text{,
\ \ \ \ \ \ \ \ \ \ \ \ \ \ \ \ \ \ \ \ \ }q_{2}=\lambda_{1}\lambda_{2}\text{,
\ }\\
p_{1}  &  =\frac{1}{\lambda_{2}-\lambda_{1}}(\lambda_{1}\mu_{1}-\lambda_{2}%
\mu_{2})\text{, \ }p_{2}=\frac{1}{\lambda_{2}-\lambda_{1}}(\mu_{1}-\mu_{2}),
\end{align*}
the Hamiltonians $h_{k}$ attain the form%
\begin{align*}
h_{1}  &  =\frac{1}{2}p_{1}^{2}-\frac{1}{2}q_{2}p_{2}^{2}-q_{1}^{3}%
+2q_{1}q_{2}+b_{1}q_{1},\\
h_{2}  &  =-q_{2}p_{1}p_{2}-\frac{1}{2}q_{1}q_{2}p_{2}^{2}-q_{1}^{2}%
q_{2}+q_{2}^{2}+b_{1}q_{2}-b_{2}.
\end{align*}
Passing to flat coordinates \cite{blaszak2007} $(x,y)$ defined through the
point transformation%
\begin{equation}
q_{1}=-x_{1}\text{, }q_{2}=-\frac{1}{4}x_{2}^{2}\text{, \ }p_{1}=-y_{1},\text{
\ }p_{2}=-\frac{2}{x_{2}}y_{2}, \label{flat}%
\end{equation}
we obtain $h_{k}$ in the form%
\begin{align}
h_{1}  &  =\frac{1}{2}y_{1}^{2}+\frac{1}{2}y_{2}^{2}+x_{1}^{3}+\frac{1}%
{2}x_{1}x_{2}^{2}-b_{1}x_{1},\label{hh}\\
h_{2}  &  =\frac{1}{2}x_{2}y_{1}y_{2}-\frac{1}{2}x_{1}y_{2}^{2}+\frac{1}%
{4}x_{1}^{2}x_{2}^{2}+\frac{1}{16}x_{2}^{4}-\frac{1}{4}b_{1}x_{2}^{2}%
-b_{2}.\nonumber
\end{align}
Note that for $b_{1}=b_{2}=0$ the Hamiltonians $h_{1}$ and $h_{2}$ represent
one of the integrable cases of H\'{e}non-Heiles systems \cite{fordy1991}. The
matrix $A$ in (\ref{8})\ and its inverse attain in the $(x,y)$-variables the
form%
\[
A=\left(
\begin{array}
[c]{ll}%
x_{1} & 0\\
\frac{1}{4}x_{2}^{2} & 1
\end{array}
\right)  \text{, \ \ }A^{-1}=\left(
\begin{array}
[c]{cc}%
\frac{1}{x_{1}} & 0\\
-\frac{1}{4x_{1}}x_{2}^{2} & 1
\end{array}
\right)  .
\]
Solving the separation relations corresponding to (\ref{ziuta}) with respect
to $b_{k}$ yields the Hamiltonians $\bar{h}_{1},\bar{h}_{2}$ that in the flat
coordinates (\ref{flat}) attain the form%
\begin{align*}
\bar{h}_{1}  &  =\frac{1}{2x_{1}}y_{1}^{2}+\frac{1}{2x_{1}}y_{2}^{2}+x_{1}%
^{2}+\frac{1}{2}x_{2}^{2}-\frac{a_{1}}{x_{1}},\\
\bar{h}_{2}  &  =-\frac{x_{2}^{2}}{8x_{1}}y_{1}^{2}+\left(  -\frac{1}{2}%
x_{1}-\frac{1}{8}\frac{x_{2}^{2}}{x_{1}}\right)  y_{2}^{2}+\frac{1}{2}%
x_{2}y_{1}y_{2}-\frac{1}{16}x_{2}^{4}+\frac{1}{4}a_{1}\frac{x_{2}^{2}}{x_{1}%
}-a_{2}.
\end{align*}
The Hamiltonians $\bar{h}_{1},\bar{h}_{2}$ and the Hamiltonians $h_{1},h_{2}$
are St\"{a}ckel conjugate. Note that the variables (\ref{flat}) are only
conformally flat for the Hamiltonians $\bar{h}_{1},\bar{h}_{2}$. The manifolds
$M_{a,b}$ are given by
\[
M_{a,b}=\left\{  (x,y):h_{1}(x,y,b_{1},b_{2})=a_{1},\text{ \ }h_{2}%
(x,y,b_{1},b_{2})=a_{2}\right\}
\]
or by%
\[
M_{a,b}=\left\{  (x,y):\bar{h}_{1}(x,y,a_{1},a_{2})=b_{1},\text{ \ }\bar
{h}_{2}(x,y,a_{1},a_{2})=b_{2}\right\}
\]
and one can verify by a direct computation that on $M_{a,b}$ we have
$X=A\bar{X}$ as well as $\bar{X}=A^{-1}X$. The corresponding reciprocal
transformation (\ref{6.8}) between the evolution parameters takes the form%
\[
d\bar{t}_{1}=x_{1}dt_{1}+\frac{1}{4}x_{2}^{2}dt_{2}\text{, \ }d\bar{t}%
_{2}=dt_{2}.
\]

\end{example}

St\"{a}ckel transform was first described by J. Hietarinta et al in
\cite{hietarinta} (where it was called the coupling-constant metamorphosis)
and in \cite{boyeretal}. In this early approach this transform was only
one-parameter. In its most general form St\"{a}ckel transform has been
introduced in \cite{serg2008} and then intensively studied in
\cite{SAPM2012,blasz2019}.

\section{Miura maps\label{s5}}

In this section\ we investigate yet another possibility of generating
integrable and separable Hamiltonian systems from algebraic curves. We will
consider algebraic curves depending on $n+N$ parameters having a certain
block-type structure. These curves generate integrable and separable
Hamiltonian systems that can be connected by a finite-dimensional analogue of
Miura maps, known from soliton theory (Theorem \ref{obvious}, see also its
proof in the Appendix). These finite-dimensional Miura maps yield in turn
multi-Hamiltonian formulation of the obtained integrable systems (Theorem
\ref{obvious2}). Results of this section generalize the results for the
one-block case, obtained earlier in \cite{marciniak2023} as well as the
results obtained in \cite{blaszak2009}.

Consider thus the $(n+N)$-parameter algebraic curve
\begin{equation}
\varphi(\lambda,\mu,a_{1},...,a_{n},c_{1},...,c_{N})=0, \label{5.1}%
\end{equation}
with $1\leq N\leq n,$ in the following form
\begin{equation}
\varphi_{0}(\lambda,\mu)+\sum_{k=1}^{m}\varphi_{k}(\lambda,\mu)\psi
_{k}(\lambda,a_{1}^{(k)},...,a_{n_{k}\,}^{(k)},c_{1}^{(k)},...,c_{\alpha
}^{(k)})=0,\ \ \ n_{1}+...+n_{m}=n,\ \ \ \alpha m=N, \label{5.1a}%
\end{equation}
where $1\leq\alpha\leq\min(n_{k})$ and where
\begin{equation}
\psi_{k}(\lambda,a_{1}^{(k)},...,a_{n_{k}}^{(k)},c_{1}^{(k)},...,c_{\alpha
}^{(k)})=c_{\alpha}^{(k)}\lambda^{n_{k}-1+\alpha}+...+c_{1}^{(k)}%
\lambda^{n_{k}}+\sum_{i=1}^{n_{k}}a_{i}^{(k)}\lambda^{n_{k}-i} \label{5.1b}%
\end{equation}
with the normalization $\varphi_{m}(\lambda,\mu)=1$.\ The curve (\ref{5.1a})
consists thus of $m$ blocks of Benenti type. Solving the related separation
relations with respect to $a_{i}^{(k)}$ we obtain $n$ Hamiltonian functions
\begin{equation}
a_{i}^{(k)}=h_{i}^{(k)}(\xi,c_{1}^{(1)},...,c_{\alpha}^{(m)}%
),\ \ \ \ \ i=1,...,n_{k}\text{, \ \ }k=1,...,m\ \ \ \label{5.3}%
\end{equation}
on a $2n$-dimensional open submanifold $M\mathcal{\subset}\bm{R}^{2n}$
parametrized by$\ \xi=\left(  \bm{\lambda},\bm{\mu}\right)$. Assume now that $\bm{c}=\left(
c_{1}^{(1)},\ldots,c_{\alpha}^{(1)},\ldots,c_{1}^{(m)},\ldots,c_{\alpha}%
^{(m)}\right)  $ are additional coordinates on the $(2n+N)$-dimensional open
submanifold $\mathcal{M\subset}\bm{R}^{2n+\alpha m}$, parametrized by
$(\bm{\lambda},\bm{\mu,c})$. The Hamiltonians $h_{i}^{(k)}$ in (\ref{5.3})
generate $n$ dynamical Hamiltonian systems (a St\"{a}ckel system) on
$\mathcal{M}$, given by
\begin{equation}
\xi_{t_{i}^{(k)}}=\pi_{0}dh_{i}^{(k)}(\bm{\lambda},\bm{\mu,c})\equiv
X_{i}^{(k)},\ \ \ i=1,...,n_{k},\ \ k=1,...,m, \label{5.4}%
\end{equation}
where $\xi\in\mathcal{M}$, $\pi_{0}$ is the canonical Poisson tensor
\[
\pi_{0}=\sum_{i=1}^{n}\frac{\partial}{\partial\lambda_{i}}\wedge\frac
{\partial}{\partial\mu_{i}}%
\]
of co-rank $\alpha m=N$ on $\mathcal{M}$ and $c_{1}^{(1)},...,c_{\alpha}%
^{(m)}$ are its Casimir functions.

The goal of this section is to construct a Miura map between the St\"{a}ckel
system (\ref{5.4}), generated by the curve (\ref{5.1a}), (\ref{5.1b}), and the
St\"{a}ckel system%
\begin{equation}
\xi_{t_{i}^{(k)}}=\bar{\pi}d\bar{h}_{r}^{(k)}%
(\bm{\bar{\lambda}},\bm{\bar{\mu},\bar{c}})\equiv\bar{X}_{i}^{(k)}%
\text{,\ \ }i=1,...,n_{k},\ \ k=1,...,m, \label{5.4a}%
\end{equation}
where%
\begin{equation}
\bar{\pi}=%
{\textstyle\sum_{i=1}^{n}}
\frac{\partial}{\partial\bar{\lambda}_{i}}\wedge\frac{\partial}{\partial
\bar{\mu}_{i}}, \label{pibar}%
\end{equation}
generated by the curve
\begin{equation}
\varphi_{0}(\bar{\lambda},\bar{\lambda}^{s}\bar{\mu})\bar{\lambda}^{-s}%
+\sum_{k=1}^{m}\varphi_{k}(\bar{\lambda},\bar{\mu}\bar{\lambda}^{s})\psi
_{k}(\bar{\lambda},\bar{h}_{1}^{(k)},...,\bar{h}_{n_{k}\,}^{(k)},\bar{c}%
_{1}^{(k)},...,\bar{c}_{\alpha}^{(k)})=0, \label{5.7}%
\end{equation}
with $n_{1}+...+n_{m}=n, \alpha \cdot m=N$, where
\begin{align}
&\psi_{k}(\bar{\lambda},\bar{h}_{1}^{(k)},...,\bar{h}_{n_{k}}^{(k)},\bar{c}%
_{1}^{(k)},...,\bar{c}_{\alpha}^{(k)})\\ \nonumber
&=\bar{c}_{\alpha}^{(k)}\bar{\lambda
}^{n_{k}+\alpha-s-1}+\ldots+\bar{c}_{s+1}^{(k)}\bar{\lambda}^{n_{k}}%
+\sum_{i=1}^{n_{k}}\bar{h}_{i}^{(k)}\bar{\lambda}^{n_{k}-i}+\bar{c}_{s}%
^{(k)}\bar{\lambda}^{-1}+\ldots+\bar{c}_{1}^{(k)}\bar{\lambda}^{-s},
\label{5.8}%
\end{align}
$s$ is an integer such that $1\leq s\leq\alpha$ and where the coordinates
$(\bm{\bar{\lambda}},\bm{\bar{\mu}},\bm{\bar{c}})=(\bar{\lambda}_{i},\bar{\mu
}_{i},\bar{c}_{1}^{(1)},\ldots,\bar{c}_{\alpha}^{(1)},\\ \ldots,$ $\bar{c}%
_{1}^{(m)},\ldots,\bar{c}_{\alpha}^{(m)})_{i=1,...,n}$ on $\mathcal{M}$ are
some functions of coordinates $(\bm{\lambda},\bm{\mu,c})$.

Consider the following map in $\bm{R}^{2}$:%
\begin{equation}
\bar{\lambda}=\lambda,\text{ \ \ }\bar{\mu}=\lambda^{-s}\mu. \label{5.9}%
\end{equation}
This map transforms (algebraically) the curve (\ref{5.1a}) into the curve
(\ref{5.7}), provided that for all $k=1,...,m$%
\begin{equation}%
\begin{array}
[c]{ll}%
\bar{c}_{i}^{(k)}=h_{n_{k}-i+1}^{(k)}, & i=1,\ldots,s\\
\bar{c}_{i}^{(k)}=c_{i}^{(k)}, & i=s+1,\ldots,\alpha\\
\bar{h}_{i}^{(k)}=c_{s-i+1}^{(k)}, & i=1,\ldots,s\\
\bar{h}_{i}^{(k)}=h_{i-s}^{(k)}, & i=s+1,\ldots,n_{k},
\end{array}
\label{5.10}%
\end{equation}
The relations (\ref{5.10}) can be inverted to%
\begin{equation}%
\begin{array}
[c]{ll}%
c_{i}^{(k)}=\bar{c}_{i}^{(k)}, & i=s+1,\ldots,\alpha\\
c_{i}^{(k)}=\bar{h}_{s-i+1}^{(k)}, & i=1,\ldots,s\\
h_{i}^{(k)}=\bar{h}_{s+i}^{(k)}, & i=1,\ldots,n_{k}-s\\
h_{i}^{(k)}=\bar{c}_{n-i+1}^{(k)}, & i=n_{k}-s+1,\ldots,n_{k}.
\end{array}
\label{mapinv}%
\end{equation}
The maps (\ref{5.9}) and (\ref{5.10}) induce the following Miura maps
$\mathfrak{M}:\mathcal{M}\rightarrow\mathcal{M}$%

\begin{equation}%
\begin{array}
[c]{ll}%
\bar{\lambda}_{i}=\lambda_{i}, & i=1,\ldots,n\\
\bar{\mu}_{i}=\lambda_{i}^{-s}\mu_{i}, & i=1,\ldots,n\\
\bar{c}_{i}^{(k)}=h_{n_{k}-i+1}^{(k)}(\lambda,\mu,c), & i=1,\ldots,s\\
\bar{c}_{i}^{(k)}=c_{i}^{(k)}, & i=s+1,\ldots,\alpha,
\end{array}
\label{miura}%
\end{equation}
with the inverse $\mathfrak{M}^{-1}:\mathcal{M}\rightarrow\mathcal{M}$%
\begin{equation}%
\begin{array}
[c]{ll}%
\lambda_{i}=\bar{\lambda}_{i}, & i=1,\ldots,n\\
\mu_{i}=\bar{\lambda}_{i}^{s}\bar{\mu}_{i}, & i=1,\ldots,n\\
c_{i}^{(k)}=\bar{h}_{s-i+1}^{(k)}(\bar{\lambda},\bar{\mu},\bar{c}), &
i=1,\ldots,s\\
c_{i}^{(k)}=\bar{c}_{i}^{(k)}, & i=s+1,\ldots,\alpha.
\end{array}
\label{minv}%
\end{equation}
Let us now present the main theorem of this section.

\begin{theorem}
\label{obvious}For any $s\in\{1,\ldots,\alpha\}$ the $n$ Hamiltonian vector
fields $X_{i}^{(k)}$ in (\ref{5.4}) and the $n$ Hamiltonian vector fields
$\bar{X}_{i}^{(k)}$ in (\ref{5.4a}) pairwise coincide, provided that the
coordinates $(\bm{\bar{\lambda}},\bm{\bar{\mu}},\bm{\bar{c}})$\ and
$(\bm{\lambda},\bm{\mu,c})$ are connected by the Miura map (\ref{miura}):
\[
X_{i}^{(k)}=\bar{X}_{i}^{(k)},\text{ \ \ }i=1,\ldots,n_{k},\text{
\ \ }k=1,\ldots,m.
\]

\end{theorem}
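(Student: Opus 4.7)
My plan is to express both vector fields in a common coordinate system and verify equality componentwise, using implicit differentiation of the separation relations exactly as in the proof of Theorem \ref{KOM}. Since $\pi_{0}$ is canonical in $(\bm{\lambda},\bm{\mu})$ with $\bm c$ as Casimirs, the components of $X_{i}^{(k)}$ are $\dot{\lambda}_{j}=\partial h_{i}^{(k)}/\partial\mu_{j}$, $\dot{\mu}_{j}=-\partial h_{i}^{(k)}/\partial\lambda_{j}$ and $\dot c_{l}^{(p)}=0$; analogously for $\bar{X}_{i}^{(k)}$ in the barred coordinates with $\bar{\pi}$. The Miura map (\ref{miura}) together with its differential will then translate the barred expressions back into $(\bm\lambda,\bm\mu,\bm c)$ for direct comparison.

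First I would differentiate implicitly the $n$ copies of (\ref{5.1a}), writing $\partial h_{i}^{(k)}/\partial\lambda_{j}$ and $\partial h_{i}^{(k)}/\partial\mu_{j}$ as entries of the inverse of the $n\times n$ St\"ackel matrix $\Phi_{(l,p),(i,k)}=\partial\varphi/\partial a_{i}^{(k)}\big|_{(\lambda_{l},\mu_{l})}$, multiplied by $\partial\varphi/\partial\lambda$ or $\partial\varphi/\partial\mu$ at $(\lambda_{j},\mu_{j})$. The same implicit differentiation applied to the $n$ copies of the transformed curve (\ref{5.7}) yields analogous formulas for $\partial\bar{h}_{i}^{(k)}/\partial\bar\lambda_{j}$ and $\partial\bar{h}_{i}^{(k)}/\partial\bar\mu_{j}$, governed by a St\"ackel matrix $\bar\Phi$ built from $\partial\bar\varphi/\partial\bar h_{i}^{(k)}$. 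Comparison of the monomial structures of $\psi_{k}$ in (\ref{5.1b}) and $\bar\psi_{k}$ in (\ref{5.8}) shows that $\bar\Phi$ is obtained from the enlarged $n\times(n+N)$ matrix $\big(\partial\varphi/\partial a\,\big|\,\partial\varphi/\partial c\big)$ by selecting a subset of $n$ columns shifted by $s$ and by rescaling with the factor $\bar\lambda_{l}^{-s}=\lambda_{l}^{-s}$ coming from the $\bar\lambda^{-s}$ prefactor of $\varphi_{0}$ in (\ref{5.7}).

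Next I would apply the differential of the Miura map, namely $\dot{\bar\lambda}_{j}=\dot\lambda_{j}$ and $\dot{\bar\mu}_{j}=\lambda_{j}^{-s}\dot\mu_{j}-s\lambda_{j}^{-s-1}\mu_{j}\,\dot\lambda_{j}$ for the $(\bar\lambda,\bar\mu)$ slots, and the corresponding chain-rule expressions for $\dot{\bar c}_{i}^{(k)}$ when $i\leq s$. Equating the formulas for $\dot{\bar\lambda}_{j}$ and $\dot{\bar\mu}_{j}$ obtained from $\bar X_{i}^{(k)}$ in barred coordinates with those produced by pushing $X_{i}^{(k)}$ forward through the Miura differential, one arrives at algebraic identities among the entries of $\Phi^{-1}$, $\bar\Phi^{-1}$ and the partials of $\varphi$, which collapse precisely because of the column-shift-plus-rescaling relation between $\Phi$ and $\bar\Phi$ established in the previous step. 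The Casimir components are automatically consistent: for $i>s$ the identifications (\ref{5.10}) give $\bar c_{i}^{(k)}=c_{i}^{(k)}$, while for $i\leq s$ one has $\bar c_{i}^{(k)}=h_{n_{k}-i+1}^{(k)}$, which is preserved along $X_{i}^{(k)}$ by the involutivity guaranteed by Theorem \ref{KOM}; a symmetric argument handles preservation of the $c_{i}^{(k)}$ along $\bar X_{i}^{(k)}$.

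The main obstacle will be the combinatorial bookkeeping tied to the reshuffling (\ref{5.10}): the roles of $s$ Hamiltonians and $s$ parameters swap in each block, so $\Phi$ and $\bar\Phi$ involve overlapping but distinct subsets of columns of the common enlarged Jacobian, and the ranges $i\leq s$ and $i>s$ have to be treated as separate cases (in the latter $\bar h_{i}^{(k)}$ corresponds to a genuine Hamiltonian of the original system, while in the former it corresponds to a Casimir). Tracking which column is routed where through the two matrix inversions, together with the $\lambda_{j}^{-s}$ factors produced by the Miura differential, is the technical reason the authors relegate the verification to the Appendix.
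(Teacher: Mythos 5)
Your proposal is correct and follows essentially the same route as the paper's Appendix: implicit differentiation of the two sets of separation identities, transport of the vector fields through the Jacobian of the Miura map, and the observation that the Casimir components vanish by involutivity of the Hamiltonians. The only cosmetic difference is that you phrase the comparison via explicit inverses of the two St\"ackel matrices, whereas the paper avoids inverting by noting that the Miura map sends one linear system exactly onto the other (same nondegenerate coefficient matrix), so the derivative vectors must coincide.
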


The proof of this theorem can be found in Appendix. This theorem means that
all the St\"{a}ckel systems (\ref{5.4a}), generated by the curves (\ref{5.7})
(one for each value of $s$ between $1$ and $\alpha$), represent on the
extended phase space $\mathcal{M}$ the same St\"{a}ckel system as the
St\"{a}ckel system (\ref{5.4}), (the one generated by the curve (\ref{5.1a})),
written in different coordinates, connected by the corresponding invertible
Miura maps (\ref{miura}). We can thus call the St\"{a}ckel system (\ref{5.4a})
an $s$-representation of the St\"{a}ckel system (\ref{5.4}), with
$s=0,\ldots,\alpha$ (where $s=0$-representation means simply the original
St\"{a}ckel system (\ref{5.4})). Since all the Miura maps (\ref{miura}) are
invertible it also means that there exists direct Miura maps (appropriate
compositions of (\ref{miura}) and (\ref{minv})) between different
$s$-representations of our St\"{a}ckel system, see \cite{marciniak2023}.

An important consequence of the above construction is the following theorem,
that generalizes the corresponding one-block theorem from \cite{marciniak2023}.

\begin{theorem}
\label{obvious2}The St\"{a}ckel system (\ref{5.4}) is $(\alpha+1)$%
-Hamiltonian, i.e. for all $s=0,\ldots,\alpha$ (and for all $k=1,\ldots,m$)
\begin{align}
X_{i}^{(k)}  &  =\pi_{0}dh_{i}^{(k)}=\pi_{s}d^{(k)}c_{s-i+1},\text{
\ \ }i=1,\ldots s,\label{bHX}\\
X_{i}^{(k)}  &  =\pi_{0}dh_{i}^{(k)}=\pi_{s}dh_{i-s}^{(k)},\text{
\ }i=s+1,\ldots n_{k},\nonumber
\end{align}
where
\begin{equation}
\pi_{s}=\sum_{i=1}^{n}\lambda_{i}^{s}\frac{\partial}{\partial\lambda_{i}%
}\wedge\frac{\partial}{\partial\mu_{i}}\text{ }+\sum_{k=1}^{m}\sum_{j=1}%
^{s}X_{j}^{(k)}\wedge\frac{\partial}{\partial c_{s-j+1}^{(k)}}%
,\ \ \ \ \ s=0,...,\alpha. \label{5.11}%
\end{equation}

\end{theorem}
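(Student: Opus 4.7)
The plan is to combine Theorem \ref{obvious} with a direct identification of $\pi_s$ as the Poisson tensor obtained by rewriting the canonical tensor $\bar{\pi}$ in the unbarred coordinates via the Miura map \eqref{miura}. Once this identification is established, Theorem \ref{obvious} yields
\[
X_i^{(k)}=\bar{X}_i^{(k)}=\bar{\pi}\,d\bar{h}_i^{(k)}=\pi_s\,d\bar{h}_i^{(k)},
\]
and substituting the expressions \eqref{5.10} for $\bar h_i^{(k)}$ in terms of unbarred coordinates produces exactly the two formulas in \eqref{bHX}. The case $s=0$ reduces to the trivial identity $X_i^{(k)}=\pi_0\,dh_i^{(k)}$ already contained in \eqref{5.4}.

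The bulk of the proof is therefore the claim that the pushforward $\mathfrak{M}^{-1}_{\ast}\bar{\pi}$ coincides with $\pi_s$ given by \eqref{5.11}. I would verify this by matching fundamental Poisson brackets on the coordinate functions $\lambda_i,\mu_i,c_l^{(k)}$ of $\mathcal{M}$. Using the inverse Miura map \eqref{minv} together with Leibniz's rule, a short computation yields $\{\lambda_i,\lambda_j\}_{\bar{\pi}}=0$, $\{\lambda_i,\mu_j\}_{\bar{\pi}}=\lambda_i^{s}\delta_{ij}$ and $\{\mu_i,\mu_j\}_{\bar{\pi}}=0$, reproducing the first sum of \eqref{5.11}. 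The mixed brackets involving the $c$-coordinates split into two cases according to whether $l>s$ or $l\le s$. For $l>s$ one has $c_l^{(k)}=\bar c_l^{(k)}$, a Casimir of $\bar\pi$, so every bracket with $c_l^{(k)}$ vanishes; this matches \eqref{5.11} since the second sum there only couples to $c_l^{(k)}$ with $l\le s$. For $l\le s$ the inverse Miura map gives $c_l^{(k)}=\bar h_{s-l+1}^{(k)}$, and Theorem \ref{obvious} identifies $\{\lambda_i,c_l^{(k)}\}_{\bar\pi}$ and $\{\mu_i,c_l^{(k)}\}_{\bar\pi}$ with the $\lambda_i$- and $\mu_i$-components of $X_{s-l+1}^{(k)}$, which is exactly the contribution produced by the term $X_{s-l+1}^{(k)}\wedge\frac{\partial}{\partial c_l^{(k)}}$ in the second sum of \eqref{5.11}. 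Finally, the $c$-$c$ brackets all vanish: either at least one entry is a $\bar c$-Casimir of $\bar{\pi}$, or both entries correspond to $\bar h$-Hamiltonians, which are in involution with respect to $\bar{\pi}$ by Theorem \ref{KOM}; the same vanishing is visible on the $\pi_s$ side since \eqref{5.11} contains no $c$-$c$ contribution.

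The delicate step I expect to require the most care is the bracket $\{\mu_i,c_l^{(k)}\}_{\bar{\pi}}$ for $l\le s$, because the substitutions $\mu_i=\bar{\lambda}_i^s\bar{\mu}_i$ and $c_l^{(k)}=\bar h_{s-l+1}^{(k)}$ produce two Leibniz terms, $s\bar\lambda_i^{s-1}\bar\mu_i\,\partial_{\bar\mu_i}\bar h_{s-l+1}^{(k)}$ and $-\bar{\lambda}_i^s\,\partial_{\bar\lambda_i}\bar h_{s-l+1}^{(k)}$, which must combine to reconstruct the genuine $\mu_i$-component of the unbarred vector field $X_{s-l+1}^{(k)}$; the ingredient that makes this go through is precisely Theorem \ref{obvious}, which identifies the barred and unbarred vector fields on $\mathcal{M}$. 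Once $\pi_s=\mathfrak{M}^{-1}_{\ast}\bar{\pi}$ is established, $\pi_s$ automatically inherits the Jacobi identity from $\bar{\pi}$, and the bi-vector formula \eqref{5.11} is merely a repackaging of the bracket table computed above.
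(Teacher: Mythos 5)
Your proposal is correct and follows essentially the same route as the paper, which proves Theorem \ref{obvious2} precisely by recomputing the canonical operator $\bar{\pi}$ in the $(\bm{\lambda},\bm{\mu},\bm{c})$ variables of the $s=0$ representation (i.e.\ identifying $\mathfrak{M}^{-1}_{\ast}\bar{\pi}=\pi_{s}$) and then invoking Theorem \ref{obvious} together with the identifications (\ref{5.10}); your bracket-by-bracket verification just makes explicit the "direct computation" the paper delegates to \cite{marciniak2023}.
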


Thus, the matrix representation of $\pi_{s}$ in the variables
$(\bm{\lambda},\bm{\mu},\bm{c})$ is given by the following $(n+\alpha
m)\times(n+\alpha m)$ matrix:
\begin{equation}
\pi_{s}(\bm{\lambda},\bm{\mu},\bm{c})=\left(
\begin{array}
[c]{cc}%
\begin{array}
[c]{ll}%
\ \ 0 & \Lambda^{s}\\
-\Lambda^{s} & 0
\end{array}
& \text{\ }X_{s}^{(1)}\ldots\ X_{1}^{(1)}\overset{\alpha-s}{\text{
\ }\overbrace{0...0}\ }\cdots\overset{}{\ X_{s}^{(m)}\ldots\ X_{1}%
^{(m)}\overset{\alpha-s}{\text{ \ }\overbrace{0...0}}}\\
\ast & \ \ \ \ \ \ \ \ \ \ \ \ \ \ \ \ \ \
\end{array}
\right)  , \label{5.11m}%
\end{equation}
where $\Lambda=\mathrm{diag}(\lambda_{1},\ldots,\lambda_{n})$, $X_{i}^{(k)}$
denote here the columns consisting of \emph{components} of the vector field
$X_{i}^{(k)}$ in the coordinates $(\bm{\lambda},\bm{\mu},\bm{c})$ and where
$\ast$ denotes transpositions of the corresponding $X_{i}^{(k)}$. The proof of
this theorem is obtained by a direct computation of the Poisson operator
$\bar{\pi}=%
{\textstyle\sum_{i=1}^{n}}
\partial_{\bar{\lambda}_{i}}\wedge\partial_{\bar{\mu}_{i}}$, for each and
every case $s=1,\ldots,\alpha$, in the variables $(\bm{\lambda},\bm{\mu,c})$
associated with the $s=0$ representation (for the proof of the one-block
version of this theorem, see \cite{marciniak2023}).

Using the notation
\[
h_{j}^{(k)}\equiv c_{1-j}^{(k)}\text{ \ for }j=0,-1,\ldots,-\alpha+1,
\]
we can write the formulas (\ref{bHX}) in the more compact form%
\begin{equation}
X_{i}^{(k)}=\pi_{s}dh_{i-s}^{(k)},\text{ \ \ }s=0,\ldots,\alpha. \label{bH}%
\end{equation}
Using this notation, we can formulate the following corollary.

\begin{corollary}
\bigskip The multi-Hamiltonian representations (\ref{bH}) generate, for each \newline
$k\in\left\{  1,\ldots,m\right\}  $, $\binom{\alpha+1}{2}$ bi-Hamiltonian
chains
\begin{equation}%
\begin{array}
[c]{l}%
\pi_{i}dh_{-i}^{(k)}\ \ \ =0\\
\pi_{i}dh_{-i+1}^{(k)}=X_{1}^{(k)}\ =\pi_{j}dh_{-j+1}^{(k)}\\
\ \ \ \ \ \ \ \ \ \ \ \ \ \ \ \ \vdots\\
\pi_{i}dh_{-i+r}^{(k)}=X_{r}^{(k)}\ =\pi_{j}dh_{-j+r}^{(k)}\\
\ \ \ \ \ \ \ \ \ \ \ \ \ \ \ \ \vdots\\
\pi_{i}dh_{-i+n_{k}}^{(k)}=X_{n_{k}}^{(k)}=\pi_{j}dh_{-j+n_{k}}^{(k)}\\
\ \ \ \ \ \ \ \ \ \ \ \ \ \ \ \ \ \ \ \ 0=\pi_{j}dh_{-j+n_{k}+1}^{(k)}%
\end{array}
\label{bh}%
\end{equation}
for $0\leq i<j\leq\alpha$.
\end{corollary}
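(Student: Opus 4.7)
The plan is to read the multi-Hamiltonian identity (\ref{bH}), namely $X_r^{(k)}=\pi_s dh_{r-s}^{(k)}$ valid for $r=1,\ldots,n_k$ and $s=0,\ldots,\alpha$, as providing two alternative Hamiltonian representations of the same vector field $X_r^{(k)}$ whenever two distinct values of $s$ are chosen. For each unordered pair $\{i,j\}$ with $0\le i<j\le\alpha$ this gives rise to exactly one chain, and since the number of such pairs is $\binom{\alpha+1}{2}$, the counting claim is immediate.

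For every interior index $r=1,\ldots,n_k$, specializing (\ref{bH}) to $s=i$ and to $s=j$ yields $\pi_i dh_{-i+r}^{(k)}=X_r^{(k)}=\pi_j dh_{-j+r}^{(k)}$, which is precisely the generic row of the chain. This step is a pure relabeling of Theorem \ref{obvious2} and requires no further work.

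The lower boundary $\pi_i dh_{-i}^{(k)}=0$ translates, via the convention $h_{-i}^{(k)}=c_{i+1}^{(k)}$, into the assertion that $c_{i+1}^{(k)}$ is a Casimir of $\pi_i$. This can be read off directly from formula (\ref{5.11}): the only $c$-derivatives appearing in $\pi_i$ are $\partial/\partial c_l^{(k')}$ with $l\le i$, while $c_{i+1}^{(k)}$ is functionally independent of $\lambda$, $\mu$ and of the $c$-coordinates with indices $\le i$, so its differential is annihilated by $\pi_i$.

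The upper boundary $0=\pi_j dh_{-j+n_k+1}^{(k)}$ is the main obstacle, because $h_{n_k+1-j}^{(k)}$ is a genuine St\"{a}ckel Hamiltonian rather than one of the extra Casimir coordinates, and the range of validity of (\ref{bH}) stops one step short of the index $r=n_k+1$. My plan is to prove it by direct computation from (\ref{5.11}): the contributions involving $X_{j'}^{(k')}(h_{n_k+1-j}^{(k)})$ drop out by the involutivity $\{h_{j'}^{(k')},h_{n_k+1-j}^{(k)}\}_{\pi_0}=0$ established in Theorem \ref{KOM}, and the surviving $\lambda,\mu$-piece together with the terms involving $\partial h_{n_k+1-j}^{(k)}/\partial c_l^{(k')}$ must be shown to cancel by exploiting the explicit Benenti-type structure of the Hamiltonians $h_r^{(k)}$. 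This is nothing but the standard closure-at-the-top of a bi-Hamiltonian chain, and it is the only place in the argument where one has to go beyond a formal rewriting of Theorem \ref{obvious2}.
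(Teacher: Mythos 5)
Your counting of the $\binom{\alpha+1}{2}$ pairs, your treatment of the generic rows as a relabelling of Theorem \ref{obvious2}, and your argument for the lower closure $\pi_i dc_{i+1}^{(k)}=0$ are all fine. The genuine gap is the upper closure $0=\pi_j dh_{n_k+1-j}^{(k)}$: you correctly isolate it as the one statement not contained in (\ref{bH}), but you do not prove it --- you only assert that the surviving terms ``must be shown to cancel''. Moreover, the route you propose (an explicit computation exploiting ``the Benenti-type structure of the Hamiltonians'') is not available at the level of generality of the curves (\ref{5.1a}): the factors $\varphi_k(\lambda,\mu)$ there are arbitrary, so the Hamiltonians $h_i^{(k)}$ need not have the explicit quadratic-in-momenta form (\ref{S3}); only the powers of $\lambda$ inside $\psi_k$ are of Benenti type. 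An honest direct verification that the $\lambda,\mu$-block of $\pi_j dh_{n_k+1-j}^{(k)}$ cancels against $\sum_{k',l}\bigl(\partial h_{n_k+1-j}^{(k)}/\partial c_{j-l+1}^{(k')}\bigr)X_l^{(k')}$ would essentially amount to redoing the Appendix computation, and you give no indication of how that cancellation would actually go.

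The missing idea is that the construction already hands you the upper closure for free. By its very definition (see the paragraph following (\ref{5.11m})), $\pi_j$ is the canonical tensor $\bar\pi=\sum_i\partial_{\bar\lambda_i}\wedge\partial_{\bar\mu_i}$ of the $s=j$ representation rewritten in the coordinates $(\bm{\lambda},\bm{\mu},\bm{c})$; its Casimir functions are therefore exactly the images of the Casimirs $\bar c_1^{(k)},\ldots,\bar c_\alpha^{(k)}$ of $\bar\pi$ under the Miura map (\ref{miura}) with $s=j$. That map sends $\bar c_l^{(k)}$ to $h_{n_k-l+1}^{(k)}$ for $l=1,\ldots,j$ and to $c_l^{(k)}$ for $l=j+1,\ldots,\alpha$, so the Casimirs of $\pi_j$ are $h_{n_k}^{(k)},\ldots,h_{n_k-j+1}^{(k)}$ together with $c_{j+1}^{(k)},\ldots,c_\alpha^{(k)}$. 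Taking $l=j$ gives $\pi_j dh_{n_k+1-j}^{(k)}=0$, and the same statement with $s=i$ and $l=i+1$ gives your lower closure $\pi_i dc_{i+1}^{(k)}=0$ without any inspection of (\ref{5.11}). With this observation both boundary rows, and hence the whole of (\ref{bh}), become formal consequences of Theorem \ref{obvious2} and the Miura maps (\ref{miura}).
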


There are two limit cases of the above construction. The first one is the case
when $m=n$ (so that all blocks have length one: $n_{k}=1$ for all
$k=1,\ldots,m)$. Then, the vector fields $X_{i}^{(k)}$ in (\ref{5.4}) on
$\mathcal{M}$ are only bi-Hamiltonian, forming $n$ one-field chains%
\[%
\begin{array}
[c]{l}%
\pi_{0}dc_{1}^{(k)}\ =\ 0\\
\pi_{0}dh_{1}^{(k)}=X_{1}^{(k)}=\pi_{1}dc_{1}^{(k)}\\
\ \ \ \ \ \ \ \ \ \ \ \ \ \ \ \ 0\ \ =\pi_{1}dh_{1}^{(k)}%
\end{array}
,\ \ \ \ \ k=1,...,n,
\]
where
\[
\pi_{0}=\sum_{i=1}^{n}\frac{\partial}{\partial\lambda_{i}}\wedge\frac
{\partial}{\partial\mu_{i}}\text{, \ \ }\pi_{1}=\sum_{i=1}^{n}\lambda_{i}%
\frac{\partial}{\partial\lambda_{i}}\wedge\frac{\partial}{\partial\mu_{i}%
}\text{ }+\sum_{k=1}^{n}X_{1}^{(k)}\wedge\frac{\partial}{\partial c_{1}^{(k)}%
}.
\]
This particular situation was considered in \cite{blaszak2009}. The opposite
limit case takes place when $m=1$ (i.e. when there is only one block in the
curve (\ref{5.1a})). Then, the considered St\"{a}ckel system is $(n+1)$%
-Hamiltonian%
\[
X_{i}=\pi_{s}dh_{i-s},\text{ \ \ }s=0,\ldots,n,
\]
with
\[
\pi_{0}=\sum_{i=1}^{n}\frac{\partial}{\partial\lambda_{i}}\wedge\frac
{\partial}{\partial\mu_{i}}\text{, \ \ }\pi_{s}=\sum_{i=1}^{n}\lambda_{i}%
^{s}\frac{\partial}{\partial\lambda_{i}}\wedge\frac{\partial}{\partial\mu_{i}%
}\text{ }+\sum_{j=1}^{s}X_{j}\wedge\frac{\partial}{\partial c_{s-j+1}%
},\ \ \ \ s=0,...,n
\]
(and with the notation $X_{i}^{(1)}\equiv X_{i},\ i=1,...,n$). In this case
there is in total $\binom{N+1}{2}$ bi-Hamiltonian chains of the form
(\ref{bh}), where $1\leq N\leq n$. This situation was considered in
\cite{marciniak2023}.

\begin{example}
\label{E2}Consider the special case of curve (\ref{ziuta}) from Example
\ref{E1} with $b_{1}=b_{2}=0$ but in the space extended by the coordinates
$c_{1}$ and $c_{2}$:%
\begin{equation}
c_{2}\lambda^{3}+c_{1}\lambda^{2}+h_{1}\lambda_{1}+h_{2}=\frac{1}{2}\lambda
\mu^{2}+\lambda^{4}. \label{jola}%
\end{equation}
This curve has a form of (\ref{5.1a})-(\ref{5.1b}) with $n=2,m=1$ (so it is a
one-block case)$,\alpha=2$ (so that $N=2$), $\varphi_{0}=-\frac{1}{2}%
\lambda\mu^{2}-\lambda^{4}$, $\varphi_{1}=1$. The Miura map (\ref{miura}) for
$s=1$ attains the form
\[
\bar{\lambda}_{1}=\lambda_{1},\text{ \ }\bar{\lambda}_{2}=\lambda_{2},\text{
\ }\bar{\mu}_{1}=\lambda_{1}^{-1}\mu_{1},\text{ \ }\bar{\mu}_{2}=\lambda
_{2}^{-1}\mu_{2}\text{, \ }\bar{c}_{1}=h_{2}(\lambda,\mu,c),\text{ }\bar
{c}_{2}=c_{2}%
\]
and it transforms the St\"{a}ckel system generated by the curve (\ref{jola})
to the St\"{a}ckel system generated by the curve
\begin{equation}
\bar{c}_{2}\bar{\lambda}^{2}+\bar{h}_{1}\bar{\lambda}+\bar{h}_{2}+\bar{c}%
_{1}\bar{\lambda}^{-1}=\frac{1}{2}\bar{\lambda}^{2}\bar{\mu}^{2}+\bar{\lambda
}^{3}. \label{jola1}%
\end{equation}
Further, for $s=2$ the Miura map (\ref{miura}) attains the form
\[
\widetilde{\lambda}_{1}=\lambda_{1},\text{ }\widetilde{\lambda}_{2}%
=\lambda_{2},\text{ \ }\widetilde{\mu}_{1}=\lambda_{1}^{-2}\mu_{1},\text{
}\widetilde{\mu}_{2}=\lambda_{2}^{-2}\mu_{2}\text{, \ }\widetilde{c}_{1}%
=h_{2}(\lambda,\mu,c),\text{ }\widetilde{c}_{2}=h_{1}(\lambda,\mu,c)
\]
(in this example we have to distinguish between the two set of "bar"
variables, one for $s=1$ and for $s=2$ so in the latter case we use the
variables $\left(  \widetilde{\lambda},\widetilde{\mu},\widetilde{c}\right)
$) and it transforms the St\"{a}ckel system generated by the curve
(\ref{jola}) to the St\"{a}ckel system generated by the curve
\begin{equation}
\widetilde{h}_{1}\widetilde{\lambda}+\widetilde{h}_{2}+\text{ }\widetilde
{c}_{2}\widetilde{\lambda}^{-1}+\widetilde{c}_{1}\widetilde{\lambda}%
^{-2}=\frac{1}{2}\widetilde{\lambda}^{3}\widetilde{\mu}^{2}+\widetilde
{\lambda}. \label{jola2}%
\end{equation}
All these three St\"{a}ckel systems are three different $s$-representations
(with $s=0,1$ and $2$) of the same three-Hamiltonian St\"{a}ckel system, with
its three Poisson tensors (\ref{5.11m}) having in the $(\mathbf{\lambda
},\mathbf{\mu},\mathbf{c})$-variables the form
\[
\pi_{0}(\bm{\lambda},\bm{\mu},\bm{c})=\left(
\begin{array}
[c]{cc}%
\begin{array}
[c]{ll}%
\ \ 0 & I\\
-I & 0
\end{array}
& 0\text{ } \ 0\\
\ast &
\end{array}
\right)  ,
\]%
\[
\pi_{1}(\bm{\lambda},\bm{\mu},\bm{c})=\left(
\begin{array}
[c]{cc}%
\begin{array}
[c]{ll}%
\ \ 0 & \Lambda\\
-\Lambda & 0
\end{array}
& X_{1}\text{ }0\\
\ast &
\end{array}
\right)  ,
\]%
\[
\pi_{2}(\bm{\lambda},\bm{\mu},\bm{c})=\left(
\begin{array}
[c]{cc}%
\begin{array}
[c]{ll}%
\ \ 0 & \Lambda^{2}\\
-\Lambda^{2} & 0
\end{array}
& X_{2}\text{ }X_{1}\\
\ast &
\end{array}
\right)  ,
\]
where $I=\mathrm{diag}(1,1)$, $\Lambda=\mathrm{diag}(\lambda_{1},\lambda_{2}%
)$. Let us write these objects explicitly in the flat coordinates
(\ref{flat}). The two St\"{a}ckel Hamiltonians $h_{1}$ and $h_{2}$ have in
these coordinates the form%
\begin{align*}
h_{1}  &  =\frac{1}{2}y_{1}^{2}+\frac{1}{2}y_{2}^{2}+x_{1}^{3}+\frac{1}%
{2}x_{1}x_{2}^{2}-c_{2}\left(  x_{1}^{2}+\frac{1}{4}x_{2}^{2}\right)
-c_{1}x_{1},\\
h_{2}  &  =\frac{1}{2}x_{2}y_{1}y_{2}-\frac{1}{2}x_{1}y_{2}^{2}+\frac{1}%
{4}x_{1}^{2}x_{2}^{2}+\frac{1}{16}x_{2}^{4}-\frac{1}{4}c_{2}x_{1}x_{2}%
^{2}-\frac{1}{4}c_{1}x_{2}^{2}%
\end{align*}
while the matrix representations of the Poisson operators $\pi_{0},\pi_{1}$
and $\pi_{2}$ attain the explicit form%
\[
\pi_{0}(\bm{x},\bm{y},\bm{c})=\left(
\begin{array}
[c]{cc}%
\begin{array}
[c]{ll}%
\ \ 0 & I\\
-I & 0
\end{array}
& 0\text{ } \ 0\\
\ast &
\end{array}
\right)  ,
\]%
\[
\pi_{1}(\bm{x},\bm{y},\bm{c})=\left(
\begin{array}
[c]{cc}%
\begin{array}
[c]{cccc}%
0 & 0 & x_{1} & \frac{1}{2}x_{2}\\
0 & 0 & \frac{1}{2}x_{2} & 0\\
-x_{1} & -\frac{1}{2}x_{2} & 0 & \frac{1}{2}y_{2}\\
-\frac{1}{2}x_{2} & 0 & -\frac{1}{2}y_{2} & 0
\end{array}
& X_{1}\text{ \ }0\\
\ast & \text{ \ \ \ \ \ }
\end{array}
\right)  ,
\]%
\[
\pi_{2}(\bm{x},\bm{y},\bm{c})=\left(
\begin{array}
[c]{c}%
\begin{array}
[c]{cccc}%
0 & 0 & x_{1}^{2}+\frac{1}{4}x_{2}^{2} & \frac{1}{2}x_{1}x_{2}\\
0 & 0 & \frac{1}{2}x_{1}x_{2} & \frac{1}{4}x_{2}^{2}\\
-x_{1}^{2}-\frac{1}{4}x_{2}^{2} & -\frac{1}{2}x_{1}x_{2} & 0 & \frac{1}%
{2}x_{1}y_{2}\\
-\frac{1}{2}x_{1}x_{2} & -\frac{1}{4}x_{2}^{2} & -\frac{1}{2}x_{1}y_{2} & 0
\end{array}%
\begin{array}
[c]{cc}
& X_{2}\text{ }X_{1}\\
&
\end{array}
\\
\ast
\end{array}
\right)  ,
\]
where the components of the vector fields $X_{1}$ and $X_{2}$ are given by
\begin{align*}
X_{1}    = & \left(  y_{1},y_{2},-3x_{1}^{2}-\frac{1}{2}x_{2}^{2}+2c_{2}%
x_{1}+c_{1},-x_{1}x_{2}+\frac{1}{2}c_{2}x_{2},0,0\right)  ^{T},\\
& \\
X_{2}   = & \left(  \frac{1}{2}x_{2}y_{2},\frac{1}{2}x_{2}y_{1}-x_{1}%
y_{2},\frac{1}{2}y_{2}^{2}-\frac{1}{2}x_{1}x_{2}^{2}+\frac{1}{4}c_{2}x_{2}%
^{2},\right. \\
& \left. -\frac{1}{2}y_{1}y_{2}-\frac{1}{2}x_{1}^{2}x_{2}-\frac{1}{8}x_{2}%
^{3}+\frac{1}{2}c_{2}x_{1}x_{2}+\frac{1}{2}c_{1}x_{2},0,0\right)
^{T},\text{\ }%
\end{align*}
while the corresponding bi-Hamiltonian chains are%
\[%
\begin{array}
[c]{l}%
\pi_{0}dc_{1}=0\\
\pi_{0}dh_{1}=X_{1}=\pi_{1}dc_{1}\\
\pi_{0}dh_{2}=X_{2}=\pi_{1}dh_{1}\\
\ \ \ \ \ \ \ \ \ \ \ \ \ 0=\pi_{1}dh_{2}%
\end{array}
\ \ \ \ \
\begin{array}
[c]{l}%
\pi_{0}dc_{1}=0\\
\pi_{0}dh_{1}=X_{1}=\pi_{2}dc_{2}\\
\pi_{0}dh_{2}=X_{2}=\pi_{2}dc_{1}\\
\ \ \ \ \ \ \ \ \ \ \ \ \ 0=\pi_{2}dh_{1}%
\end{array}
\ \ \ \ \
\begin{array}
[c]{l}%
\pi_{1}dc_{2}=0\\
\pi_{1}dc_{1}=X_{1}=\pi_{2}dc_{2}\\
\pi_{1}dh_{1}=X_{2}=\pi_{2}dc_{1}\\
\ \ \ \ \ \ \ \ \ \ \ \ \ 0=\pi_{2}dh_{1}%
\end{array}
\ .
\]

\end{example}

\begin{example}
\label{E3}Consider now another specification of the curve (\ref{ziuta}) from
Example \ref{E1}, this time with $a_{1}=a_{2}=0$ and in the space extended by
the coordinates $c^{(1)}$ and $c^{(2)}$. More specifically, we consider the
curve%
\[
\lambda^{2}\left(  c_{1}^{(2)}\lambda+h_{1}^{(2)}\right)  +c_{1}^{(1)}%
\lambda+h_{1}^{(1)}=\frac{1}{2}\lambda\mu^{2}+\lambda^{4}.
\]
This curve has the form of (\ref{5.1a})-(\ref{5.1b}) with $n=2,m=2$. Thus, it
is a two-block case, with $n_{1}=n_{2}=1$, $\alpha=1$ (so that $N=2$),
$\varphi_{0}=-\frac{1}{2}\lambda\mu^{2}-\lambda^{4}$, $\varphi_{1}=1$ and
$\varphi_{2}=\lambda^{2}$. \ To simplify the notation, let us denote
$h_{1}=h_{1}^{(2)}$,$h_{2}=$ $h_{1}^{(1)}$, $c_{1}=c_{1}^{(2)}$, $c_{2}%
=c_{1}^{(1)}$ so that the curve is%
\begin{equation}
\lambda^{2}\left(  c_{1}\lambda+h_{1}\right)  +c_{2}\lambda+h_{2}=\frac{1}%
{2}\lambda\mu^{2}+\lambda^{4}. \label{ela1}%
\end{equation}
The Miura map (\ref{miura}) for $s=1$ attains the form
\[
\bar{\lambda}_{1}=\lambda_{1},\text{ \ }\bar{\lambda}_{2}=\lambda_{2},\text{
\ }\bar{\mu}_{1}=\lambda_{1}^{-1}\mu_{1},\text{ \ }\bar{\mu}_{2}=\lambda
_{2}^{-1}\mu_{2}\text{, \ }\bar{c}_{1}=h_{1}(\lambda,\mu,c)\text{, \ }\bar
{c}_{2}=h_{2}(\lambda,\mu,c)
\]
and it transforms the St\"{a}ckel system generated by the curve (\ref{ela1})
to the St\"{a}ckel system generated by the curve
\begin{equation}
\bar{\lambda}^{2}\left(  \bar{h}_{1}+\bar{c}_{1}\bar{\lambda}^{-1}\right)
+\bar{h}_{2}+\bar{c}_{2}\bar{\lambda}^{-1}=\frac{1}{2}\bar{\lambda}^{2}%
\bar{\mu}^{2}+\bar{\lambda}^{3}. \label{ela2}%
\end{equation}
Both St\"{a}ckel systems are two different $s$-representations (with $s=0$%
,$1$) of the same bi-Hamiltonian St\"{a}ckel system. Two Poisson tensors
(\ref{5.11m}) have in the $(\bm{\lambda},\bm{\mu},\bm{c})$-variables the form
\[
\pi_{0}(\bm{x},\bm{y},\bm{c})=\left(
\begin{array}
[c]{cc}%
\begin{array}
[c]{ll}%
\ \ 0 & I\\
-I & 0
\end{array}
& 0\text{ } \ 0\\
\ast &
\end{array}
\right)  ,
\]%
\[
\pi_{1}(\bm{\lambda},\bm{\mu},\bm{c})=\left(
\begin{array}
[c]{cc}%
\begin{array}
[c]{ll}%
\ \ 0 & \Lambda\\
-\Lambda & 0
\end{array}
& X_{1}\text{ }X_{2}\\
\ast &
\end{array}
\right)  ,
\]
where $X_{i}$ denote here the components of the vector fields $\pi_{0}dh_{i}$.
Let us write down these objects explicitly in the conformally flat coordinates
(\ref{flat}). The St\"{a}ckel Hamiltonians $h_{1}$ and $h_{2}$ are%
\begin{align*}
h_{1}  &  =\frac{1}{2x_{1}}y_{1}^{2}+\frac{1}{2x_{1}}y_{2}^{2}+x_{1}^{2}%
+\frac{1}{2}x_{2}^{2}-c_{1}\left(  x_{1}+\frac{x_{2}^{2}}{4x_{1}}\right)
-\frac{c_{2}}{x_{1}},\\
h_{2}  &  =-\frac{x_{2}^{2}}{8x_{1}}y_{1}^{2}+\left(  -\frac{1}{2}x_{1}%
-\frac{1}{8}\frac{x_{2}^{2}}{x_{1}}\right)  y_{2}^{2}+\frac{1}{2}x_{2}%
y_{1}y_{2}-\frac{1}{16}x_{2}^{4}+\frac{1}{16}c_{1}\frac{x_{2}^{4}}{x_{1}%
}+\frac{1}{4}c_{2}\frac{x_{2}^{2}}{x_{1}},
\end{align*}
while the matrix representations of the Poisson operators $\pi_{0}$ and
$\pi_{1}$ attain the explicit form%
\[
\pi_{0}(\bm{x},\bm{y},\bm{c})=\left(
\begin{array}
[c]{cc}%
\begin{array}
[c]{ll}%
\ \ 0 & I\\
-I & 0
\end{array}
& 0\text{ } \ 0\\
\ast &
\end{array}
\right)  ,
\]%
\[
\pi_{1}(\bm{x},\bm{y},\bm{c})=\left(
\begin{array}
[c]{cc}%
\begin{array}
[c]{cccc}%
0 & 0 & x_{1} & \frac{1}{2}x_{2}\\
0 & 0 & \frac{1}{2}x_{2} & 0\\
-x_{1} & -\frac{1}{2}x_{2} & 0 & \frac{1}{2}y_{2}\\
-\frac{1}{2}x_{2} & 0 & -\frac{1}{2}y_{2} & 0
\end{array}
& X_{1}\text{ \ }X_{2}\\
\ast & \text{ \ \ \ \ \ }%
\end{array}
\right)  ,
\]
where%
\begin{align*}
X_{1}    =& \left(  \frac{y_{1}}{x_{1}},\frac{y_{2}}{x_{1}},\frac{1}{2x_{1}%
^{2}}y_{1}^{2}+\frac{1}{2x_{1}^{2}}y_{2}^{2}-2x_{1}+c_{1}\left(  1-\frac
{x_{2}^{2}}{4x_{1}^{2}}\right)  ,-x_{2}+\frac{1}{2}c_{1}\frac{x_{2}}{x_{1}%
},0,0\right)  ^{T},\\
X_{2}    =& \left(  -\frac{1}{4}\frac{x_{2}^{2}}{x_{1}}y_{1}+\frac{1}{2}%
x_{2}y_{2},-\left(  x_{1}+\frac{1}{4}\frac{x_{2}^{2}}{x_{1}^{2}}\right)
y_{2}+\frac{1}{2}x_{2}y_{1},\right. \\
& -\frac{1}{8}\frac{x_{2}^{2}}{x_{1}^{2}}y_{1}%
^{2}-\left(  -\frac{1}{2}+\frac{1}{8}\frac{x_{2}^{2}}{x_{1}^{2}}\right)
y_{2}^{2}+\frac{1}{16}c_{1}\frac{x_{2}^{4}}{x_{1}^{2}}+\frac{1}{4}c_{2}%
\frac{x_{2}^{2}}{x_{1}^{2}}, \\
&  \left.  \frac{1}{4}\frac{x_{2}}{x_{1}}(y_{1}^{2}+y_{2}%
^{2})-\frac{1}{2}y_{1}y_{2}+\frac{1}{4}x_{2}^{3}-\frac{1}{4}c_{1}\frac
{x_{2}^{3}}{x_{1}}-\frac{1}{2}c_{2}\frac{x_{2}}{x_{1}},0,0\right)  ^{T}.
\end{align*}
The two corresponding bi-Hamiltonian chains are%
\[%
\begin{array}
[c]{l}%
\pi_{0}dc_{2}\ =\ 0\\
\pi_{0}dh_{2}\ =X_{2}=\pi_{1}dc_{2}\\
\ \ \ \ \ \ \ \ \ \ \ \ 0\ \ =\pi_{1}dh_{2}%
\end{array}
,\ \ \ \
\begin{array}
[c]{l}%
\pi_{0}dc_{1}\ =\ 0\\
\pi_{0}dh_{1}\ =X_{1}=\pi_{1}dc_{1}\\
\ \ \ \ \ \ \ \ \ \ \ \ 0\ \ =\pi_{1}dh_{1}.
\end{array}
\
\]

\end{example}

\section*{Appendix}

\setcounter{equation}{0} \renewcommand{\theequation}{A.\arabic{equation}} We
prove here Theorem \ref{obvious}. Let us fix $k\in\left\{  1,\ldots,m\right\}
$ and then $r\in\left\{  1,\ldots,n_{k}\right\}  $. We want to show that the
vector fields $X_{r}^{(k)}$ and $\bar{X}_{r}^{(k)}$ on $\mathcal{M}$ coincide.
Obviously, $\bar{X}_{r}^{(k)}=\bar{\pi}d\bar{h}_{r}^{(k)}=%
{\textstyle\sum\limits_{i=1}^{n}}
\left(  \frac{\partial\bar{h}_{r}^{(k)}}{\partial\bar{\mu}_{i}}\frac{\partial
}{\partial\bar{\lambda}_{i}}-\frac{\partial\bar{h}_{r}^{(k)}}{\partial
\bar{\lambda}_{i}}\frac{\partial}{\partial\bar{\mu}_{i}}\right)  $. Let us
write this vector field in the coordinates $(\lambda_{i},\mu_{i},c_{1}%
^{(1)},\ldots,c_{\alpha}^{(1)},\ldots,c_{1}^{(m)},\ldots,c_{\alpha}%
^{(m)})_{i=1,...,n}$ connected with the coordinates \newline
$(\bar{\lambda}_{i} \bar{\mu}_{i},\bar{c}_{1}^{(1)}\ldots, \bar{c}_{\alpha}^{(1)}, \ldots,\bar
{c}_{1}^{(m)},\ldots,\bar{c}_{\alpha}^{(m)})_{i=1,...,n}$ through the Miura
map (\ref{minv}). Components of $\bar{X}_{r}^{(k)}$ in the non-bar coordinates
will be given by
\begin{equation}
J\left(
\begin{array}
[c]{c}%
\frac{\partial\bar{h}_{r}^{(k)}}{\partial\bar{\mu}}\\
-\frac{\partial\bar{h}_{r}^{(k)}}{\partial\bar{\lambda}}\\
0_{N\times1}%
\end{array}
\right)  , \label{Ap1}%
\end{equation}
where
\[
\frac{\partial\bar{h}_{r}^{(k)}}{\partial\bar{\mu}}=\left(  \frac{\partial
\bar{h}_{r}^{(k)}}{\partial\bar{\mu}_{1}},\ldots,\frac{\partial\bar{h}%
_{r}^{(k)}}{\partial\bar{\mu}_{n}}\right)  ^{T}\text{, \ \ \ }\frac
{\partial\bar{h}_{r}^{(k)}}{\partial\bar{\lambda}}=\left(  \frac{\partial
\bar{h}_{r}^{(k)}}{\partial\bar{\lambda}_{1}},\ldots,\frac{\partial\bar{h}%
_{r}^{(k)}}{\partial\bar{\lambda}_{n}}\right)  ^{T}%
\]
and where $J$ is the Jacobian of the map (\ref{minv}), given explicitly by%

\[
J=\left(
\begin{array}
[c]{cc}%
\begin{array}
[c]{ccccc}%
I_{n} &  &  &  & 0_{n\times n}\\
&  &  &  & \\
s\Lambda^{s-1}U &  &  &  & \Lambda^{s}%
\end{array}
& 0_{2n\times N}\\%
\begin{array}
[c]{c}%
B_{1}\\
\vdots\\
B_{m}%
\end{array}
& \ast
\end{array}
\right)  ,
\]
with $\Lambda=$\textrm{diag}$(\bar{\lambda}_{1},\ldots,\bar{\lambda}_{n})$,
$U=$\textrm{diag}$(\bar{\mu}_{1},\ldots,\bar{\mu}_{n})$, and with $B_{l}$,
$l=1,\ldots,m$ being the $\alpha\times2n$ matrix given by%
\[
B_{l}=\left(
\begin{array}
[c]{cc}%
\left[  \frac{\partial\bar{h}_{s-i+1}^{(l)}}{\partial\bar{\lambda}_{j}%
}\right]  _{\substack{i=1\ldots s\\j=1,\ldots n}} & \left[  \frac{\partial
\bar{h}_{s-i+1}^{(l)}}{\partial\bar{\mu}_{j}}\right]  _{\substack{i=1\ldots
s\\j=1,\ldots n}}\\
& \\
0_{(\alpha-s)\times n} & 0_{(\alpha-s)\times n}%
\end{array}
\right)
\]
while $\ast$ represent some $N\times N$ matrix. Thus, (\ref{Ap1}) has the form%
\[
J\left(
\begin{array}
[c]{c}%
\frac{\partial\bar{h}_{r}^{(k)}}{\partial\bar{\mu}}\\
-\frac{\partial\bar{h}_{r}^{(k)}}{\partial\bar{\lambda}}\\
0_{N\times1}%
\end{array}
\right)  =\left(
\begin{array}
[c]{c}%
\frac{\partial\bar{h}_{r}^{(k)}}{\partial\bar{\mu}}\\
s\Lambda^{s-1}U\frac{\partial\bar{h}_{r}^{(k)}}{\partial\bar{\mu}}-\Lambda
^{s}\frac{\partial\bar{h}_{r}^{(k)}}{\partial\bar{\lambda}}\\
Z_{1}\\
\vdots\\
Z_{m}%
\end{array}
\right)
\]
where%
\[
Z_{l}=\left(
\begin{array}
[c]{c}%
\left\{  \bar{h}_{s}^{(l)},\bar{h}_{r}^{(k)}\right\}  _{\bar{\pi}}\\
\vdots\\
\left\{  \bar{h}_{1}^{(l)},\bar{h}_{r}^{(k)}\right\}  _{\bar{\pi}}\\
0_{(\alpha-s)\times1}%
\end{array}
\right)  =0_{\alpha\times1},
\]
since all the Hamiltonians $\bar{h}_{r}^{(k)}$ mutually Poisson commute with
respect to $\bar{\pi}$. So, (\ref{Ap1}) is given by%
\[
J\left(
\begin{array}
[c]{c}%
\frac{\partial\bar{h}_{r}^{(k)}}{\partial\bar{\mu}}\\
-\frac{\partial\bar{h}_{r}^{(k)}}{\partial\bar{\lambda}}\\
0_{N\times1}%
\end{array}
\right)  =\left(
\begin{array}
[c]{c}%
\frac{\partial\bar{h}_{r}^{(k)}}{\partial\bar{\mu}}\\
s\Lambda^{s-1}U\frac{\partial\bar{h}_{r}^{(k)}}{\partial\bar{\mu}}-\Lambda
^{s}\frac{\partial\bar{h}_{r}^{(k)}}{\partial\bar{\lambda}}\\
0_{N\times1}%
\end{array}
\right)  .
\]
Therefore, $X_{r}^{(k)}=\bar{X}_{r}^{(k)}$ provided that
\begin{equation}
\frac{\partial\bar{h}_{r}^{(k)}}{\partial\bar{\mu}_{p}}=\frac{\partial
h_{r}^{(k)}}{\partial\mu_{p}}\text{, \ }p=1,\ldots,n \label{Ap2}%
\end{equation}
and provided that%
\[
s\bar{\lambda}_{p}^{s-1}\bar{\mu}_{p}\frac{\partial\bar{h}_{r}^{(k)}}%
{\partial\bar{\mu}_{p}}-\bar{\lambda}_{p}^{s}\frac{\partial\bar{h}_{r}^{(k)}%
}{\partial\bar{\lambda}_{p}}=-\frac{\partial h_{r}^{(k)}}{\partial\lambda_{p}%
},\text{ \ \ }p=1,\ldots,n.
\]
Using (\ref{Ap2}) and the fact that $\bar{\lambda}_{p}=\lambda_{p}$, $\bar
{\mu}_{p}=\lambda_{i}^{-p}\mu_{i},$, this last condition is equivalent to%
\begin{equation}
\frac{\partial\bar{h}_{r}^{(k)}}{\partial\bar{\lambda}_{p}}=\lambda_{p}%
^{-s}\frac{\partial h_{r}^{(k)}}{\partial\lambda_{p}}+s\lambda_{p}^{-s-1}%
\mu_{p}\frac{\partial h_{r}^{(k)}}{\partial\mu_{p}}\text{, \ }p=1,\ldots,n.
\label{Ap3}%
\end{equation}
We will now prove first (\ref{Ap2}) and then (\ref{Ap3}). The separation
relations following from the curve (\ref{5.1a}) yield the following $n$
identities on $\mathcal{M\subset}\bm{R}^{2n+\alpha m}$:
\begin{equation}
\varphi_{0}(\lambda_{i},\mu_{i})+\sum_{k=1}^{m}\varphi_{k}(\lambda_{i},\mu
_{i})\psi_{k}(\lambda_{i},h_{1}^{(k)},...,h_{n_{k}\,}^{(k)},c_{1}%
^{(k)},...,c_{\alpha}^{(k)})\equiv0,\ \ \ i=1,\ldots,n, \label{Id}%
\end{equation}
where now
\begin{equation}
\psi_{k}(\lambda_{i},h_{1}^{(k)},...,h_{n_{k}}^{(k)},c_{1}^{(k)}%
,...,c_{\alpha}^{(k)})=c_{\alpha}^{(k)}\lambda_{i}^{n_{k}-1+\alpha}%
+...+c_{1}^{(k)}\lambda_{i}^{n_{k}}+\sum_{j=1}^{n_{k}}h_{j}^{(k)}\lambda
_{i}^{n_{k}-j}. \label{psi}%
\end{equation}
Differentiating each of these identities with respect to $\mu_{p}$ yields (no
summation over $i$) we obtain%
\begin{equation}
\varphi_{0,2}^{\prime}(\lambda_{i},\mu_{i})\delta_{ip}+\sum_{k=1}^{m}%
\varphi_{k,2}^{\prime}(\lambda_{i},\mu_{i})\psi_{k}(\ldots)\delta_{ip}%
+\sum_{k=1}^{m}\varphi_{k}(\lambda_{i},\mu_{i})\sum_{j=1}^{n_{k}}%
\frac{\partial h_{j}^{(k)}}{\partial\mu_{p}}\lambda_{i}^{n_{k}-j}%
\equiv0\text{, \ }\ i=1,\ldots,n. \label{jeden}%
\end{equation}
Let us note, for later purposes, that for $i\neq p$ the above identities
attain the form%
\begin{equation}
\sum_{k=1}^{m}\varphi_{k}(\lambda_{i},\mu_{i})\sum_{j=1}^{n_{k}}\frac{\partial
h_{j}^{(k)}}{\partial\mu_{p}}\lambda_{i}^{n_{k}-j}\equiv0\text{, \ }\ i\neq p.
\label{jedenprim}%
\end{equation}
Analogously, the separation relations following from the curve (\ref{5.7})
yield the following $n$ identities on $\mathcal{M}$:%
\begin{equation}
\varphi_{0}(\bar{\lambda}_{i},\bar{\lambda}_{i}^{s}\bar{\mu}_{i})\bar{\lambda
}_{i}^{-s}+\sum_{k=1}^{m}\varphi_{k}(\bar{\lambda}_{i},\bar{\mu}_{i}%
\bar{\lambda}_{i}^{s})\psi_{k}(\bar{\lambda}_{i},\bar{h}_{1}^{(k)},...,\bar
{h}_{n_{k}\,}^{(k)},\bar{c}_{1}^{(k)},...,\bar{c}_{\alpha}^{(k)}%
)\equiv0,\ \ \ i=1,\ldots,n, \label{Idbar}%
\end{equation}
where now
\begin{align}  \label{psibar}
\psi_{k}(\bar{\lambda}_{i},\bar{h}_{1}^{(k)},...,\bar{h}_{n_{k}}^{(k)},\bar
{c}_{1}^{(k)},...,\bar{c}_{\alpha}^{(k)})= & \bar{c}_{\alpha}^{(k)}\bar{\lambda
}_{i}^{n_{k}+\alpha-s-1}+\ldots+\bar{c}_{s+1}^{(k)}\bar{\lambda}_{i}^{n_{k}%
} \nonumber \\ 
& + \sum_{j=1}^{n_{k}}\bar{h}_{j}^{(k)}\bar{\lambda}_{i}^{n_{k}-j}+\bar{c}%
_{s}^{(k)}\bar{\lambda}_{i}^{-1}+\ldots+\bar{c}_{1}^{(k)}\bar{\lambda}%
_{i}^{-s}. 
\end{align}
Differentiating each of these identities with respect to $\bar{\mu}_{p}$
yields (again, no summation over $i$)%
\begin{equation}
\varphi_{0,2}^{\prime}(\bar{\lambda}_{i},\bar{\lambda}_{i}^{s}\bar{\mu}%
_{i})\delta_{ip}+\sum_{k=1}^{m}\varphi_{k,2}^{\prime}(\bar{\lambda}_{i}%
,\bar{\lambda}_{i}^{s}\bar{\mu}_{i})\bar{\lambda}_{i}^{s}\psi_{k}(\bar{\left.
\ldots\right.  })\delta_{ip}+\sum_{k=1}^{m}\varphi_{k}(\bar{\lambda}_{i}%
,\bar{\lambda}_{i}^{s}\bar{\mu})\sum_{j=1}^{n_{k}}\frac{\partial\bar{h}%
_{j}^{(k)}}{\partial\bar{\mu}_{p}}\lambda_{i}^{n_{k}-j}\equiv0,\text{
\ } \label{dwa}%
\end{equation}
$i=1,\ldots,n$, where $\psi_{k}(\bar{\left.  \ldots\right.  })$ is the short-hand notation for
(\ref{psibar}). A simple identification through (\ref{mapinv}) shows that
$\psi_{k}(...)$ in (\ref{psi}) and $\psi_{k}(\bar{\left.  \ldots\right.  })$
in (\ref{psibar}) coincide as functions on $\mathcal{M}$. Thus, the Miura map
(\ref{minv}) maps the identities (\ref{dwa}) exactly onto the corresponding
(i.e. with the same $i$) identities (\ref{jeden}). That means that
$\frac{\partial h_{j}^{(k)}}{\partial\mu_{p}}$ and $\frac{\partial\bar{h}%
_{j}^{(k)}}{\partial\bar{\mu}_{p}}$ satisfy the same set of $n$ linear
equations, with the same non-degenerated system matrix, and thus they must
pairwise coincide. Thus, (\ref{Ap2}) is proven.

Let us now prove (\ref{Ap3}). Differentiating all the identities (\ref{Id})
with respect to $\lambda_{p}$ we obtain, for $i=p$%
\begin{align}
&  \varphi_{0,1}^{\prime}(\lambda_{p},\mu_{p})+\sum_{k=1}^{m}\varphi
_{k,1}^{\prime}(\lambda_{p},\mu_{p})\psi_{k}(\lambda_{p},\ldots)\nonumber\\
&  +\sum_{k=1}^{m}\varphi_{k}(\lambda_{p},\mu_{p})\left(  \sum_{j=1}^{\alpha
}(n_{k}+j-1)c_{j}^{(k)}\lambda_{p}^{n_{k}+j-2}+\sum_{j=1}^{n_{k}}%
(n_{k}-j)\lambda_{p}^{n_{k}-j-1}h_{j}^{(k)} \right. \nonumber \\
& \hspace{80pt} \left. +\sum_{j=1}^{n_{k}}\lambda
_{p}^{n_{k}-j}\frac{\partial h_{j}^{(k)}}{\partial\lambda_{p}}\right) =0
\label{ip}
\end{align}
and for $i\neq p$%

\begin{equation}
\sum_{k=1}^{m}\varphi_{k}(\lambda_{i},\mu_{i})\sum_{j=1}^{n_{k}}\lambda
_{i}^{n_{k}-j}\frac{\partial h_{j}^{(k)}}{\partial\lambda_{p}}\equiv0.
\label{inp}%
\end{equation}
Analogously, multiplying all the identities (\ref{Idbar}) by $\bar{\lambda
}_{i}^{s}$\ and differentiating with respect to $\bar{\lambda}_{p}$ we obtain,
for $i=p$%
\begin{align}
&  \varphi_{0,1}^{\prime}(\bar{\lambda}_{p},\bar{\mu}_{p}\bar{\lambda}_{p}%
^{s})+\varphi_{0,2}^{\prime}(\bar{\lambda}_{p},\bar{\mu}_{p}\bar{\lambda}%
_{p}^{s})s\bar{\mu}_{p}\bar{\lambda}_{p}^{s-1}+\sum_{k=1}^{m}\varphi
_{k,1}^{\prime}(\bar{\lambda}_{p},\bar{\mu}_{p}\bar{\lambda}_{p}^{s})\psi
_{k}(\bar{\lambda}_{p},\bar{\left.  \ldots\right.  }) \label{ip bar} \\
& +\sum_{k=1}^{m} \varphi_{k,2}^{\prime}(\bar{\lambda}_{p},\bar{\mu}_{p}\bar{\lambda}_{p}%
^{s})s\bar{\mu}_{p}\bar{\lambda}_{p}^{s-1}\psi_{k}(\bar{\lambda}_{p}%
,\bar{\left.  \ldots\right.  }) \nonumber\\
&+\sum_{k=1}^{m}\varphi_{k}(\bar{\lambda}_{p},\bar{\mu}_{p}\bar{\lambda}%
_{p}^{s})\left(  \sum_{j=1}^{\alpha}(n_{k}+j-1)\bar{c}_{j}^{(k)}\bar{\lambda
}_{p}^{n_{k}+j-2}  +\sum_{j=1}^{s}(j-1)\bar{c}_{j}^{(k)}\bar{\lambda}_{p}%
^{j-2}\right. \nonumber \\
&\hspace{100pt} + \left. \sum_{j=1}^{n_{k}}(n_{k}+s-j)\bar{\lambda}_{p}^{n_{k}+s-j-1}\bar{h}%
_{j}^{(k)} +  \sum_{j=1}^{n_{k}}\bar{\lambda}_{p}^{n_{k}+s-j}\frac{\partial
\bar{h}_{j}^{(k)}}{\partial\bar{\lambda}_{p}}\right)  \equiv0 \nonumber
\end{align}
and for $i\neq p$%
\begin{equation}
\sum_{k=1}^{m}\varphi_{k}(\bar{\lambda}_{i},\bar{\mu}_{i}\bar{\lambda}_{i}%
^{s})\sum_{j=1}^{n_{k}}\bar{\lambda}_{i}^{n_{k+s-j}}\frac{\partial\bar{h}%
_{j}^{(k)}}{\partial\bar{\lambda}_{p}}=0. \label{inpbar}%
\end{equation}
It is immediate to see that the Miura map (\ref{minv}) transforms all the
relations (\ref{inpbar}) to the corresponding (i.e. with the same $i$)
relations (\ref{inp}). Further, careful comparison of all terms in (\ref{ip})
and (\ref{ip bar}) using (\ref{jedenprim}) shows that the Miura map
(\ref{minv}) transforms (\ref{ip bar})\ onto (\ref{ip}) if and only if the
condition (\ref{Ap3}) holds. Thus, (\ref{Ap3}) is proved.

\label{lastpage}
\end{document}